\newtheorem{theorem}{Theorem}[section]
\newtheorem{corollary}[theorem]{Corollary}
\newtheorem{definition}[theorem]{Definition}
\newtheorem{lemma}[theorem]{Lemma}
\newtheorem{observation}[theorem]{Observation}
\newcommand{\Oh}{\mathcal{O}}
\def\II{\mathcal{I}}
\def\NN{\mathbb{N}}
\def\RR{\mathbb{R}}
\newcommand{\STAB}{\mathop{\mathrm{STAB}}}
\newcommand{\FORM}{\mathop{\mathrm{FOP}}}
\newcommand{\CUT}{\mathop{\mathrm{CUT}}}
\newcommand{\PLC}{\mathop{\mathrm{PLC}}}
\newcommand{\conv}{\mathop{\mathrm{conv}}}
\newcommand{\size}{\mathop{\mathrm{size}}}
\newcommand{\xc}{\mathop{\mathrm{xc}}} % extension complexity
\let\gradd=\triangledown
\begin{document}

\begin{frontmatter}
\title{Parameterized Extension Complexity of Independent Set
		and Related Problems}
%%%%%%%%%%%%%%%%%%%%%%%%%%%%%%%%%%%%%%%%%%%%%%%%%%%%%%%%%%%%%%%%%
%  \tnotetext[t1]{Research supported by the Center of Excellence --
% 	Institute for Theoretical Computer Science;
% 	Czech Science Foundation Project no.~P202/12/G061.}

\author[jgph]{Jakub Gajarsk\'y\,\fnmark[g1,gacr-ph]}
\ead{jakub.gajarsky@tu-berlin.de}
\fntext[g1]{Current affiliation: Technical University Berlin}

\author[jgph]{Petr Hlin\v{e}n\'y\,\fnmark[gacr-ph]}
\ead{hlineny@fi.muni.cz}

\author[hrt]{Hans Raj Tiwary\,\fnmark[gacr-hrt]}
\ead{hansraj@kam.mff.cuni.cz}

\address[jgph]{Faculty of Informatics, Masaryk University, Brno, Czech Republic.}
\address[hrt]{KAM/ITI, MFF, Charles University, Prague, Czech Republic.}

\fntext[gacr-ph]{Supported by the Czech Science Foundation,
	research projects 14-03501S and (since 2017) 17-00837S.}
\fntext[gacr-hrt]{Supported by the Czech Science Foundation,
	research project GA15-11559S.}

%\authorrunning{J.~Gajarsk\'y, P.~Hlin\v{e}n\'y and H.\,R.~Tiwary}
%\Copyright{Jakub Gajarsk\'y, Petr Hlin\v{e}n\'y and Hans Raj Tiwary}
%\subjclass{Mathematics of Computing -- G.1.6~Optimization, G.2.2~Graph Theory;~
	% Theory of Computation -- F.1.3~Complexity Measures and Classes}
\begin{keyword}
%\keywords{
extension complexity \sep	
fixed-parameter polynomial extension \sep 
independent set polytope \sep	bounded expansion
\end{keyword}

%\maketitle

\begin{abstract} 
Let $G$ be a graph on $n$ vertices and $\STAB_k(G)$ be the 
convex hull of characteristic vectors of its independent sets of size at most $k$. 
We study extension complexity of $\STAB_k(G)$ with respect to a fixed
parameter $k$ (analogously to, e.g., parameterized computational complexity of problems). 
We show that for graphs $G$ from a class of bounded expansion it holds
that $\xc(\STAB_k(G))\leqslant \Oh(f(k)\cdot n)$ where the function $f$ depends only on the class. 
This result can be extended in a simple way to a wide range of similarly 
defined graph polytopes.
In case of general graphs we show that there is {\em no function $f$}
such that, for all values of the parameter $k$ and for all graphs on $n$ vertices, 
the extension complexity of $\STAB_k(G)$ is at most $f(k)\cdot n^{\Oh(1)}.$
While such results are not surprising since
it is known that optimizing over $\STAB_k(G)$ is $FPT$
for graphs of bounded expansion and $W[1]$-hard in general,
they are also not trivial and in both cases stronger than the
corresponding computational complexity results.
%Our upper bound can be generalized for all polytopes corresponding to $FO$ 
%predicates and bounded expansion graphs.
\end{abstract}
\end{frontmatter}
\vfill

\section{Introduction}
Polyhedral (aka LP) formulations of combinatorial problems belong to the
basic toolbox of combinatorial optimization.
In a nutshell, a set of feasible solutions of some problem is suitably encoded by
a set of vectors, whose convex hull forms a polytope over which one can then
optimize using established tools.
A polytope $Q$ is said to be an {\em extended formulation} or {\em extension} of a
polytope $P$ if $P$ is a projection of $Q$.  Measuring the size of a
polytope by the minimum number of inequalities required to describe it, one
can define the extension complexity of a polytope to be the size of the smallest
extension of the polytope.  This notion has a rich history in combinatorial
optimization where by adding extra variables one can sometimes obtain
significantly smaller polytopes.  For some recent survey on extended
formulations in the context of combinatorial optimization and integer
programming see
\cite{ConfortiCornuejolsZambelli10,Kaibel11,VanderbeckWolsey2010,Wolsey11}.
%\say{Do we need examples?}

Since linear (or indeed convex) optimization of a polytope $P$ can instead be
indirectly done by optimizing over an extended formulation of $P$,
this concept provides a powerful model for solving many combinatorial problems. 
Various Linear Program (LP) solvers exist today that perform quite well in
practice and it is desirable if a problem can be modeled as a small-sized
polytope over which one can use an existing LP solver for linear optimization.  
However, in recent years super-polynomial lower
bound on the  extension complexity of polytopes associated with many combinatorial problems have been established.  These bounds have been generalized to various settings, such
as convex extended formulations, approximation algorithms, etc.  These
results are too numerous for a comprehensive listing, but we refer the
interested readers to some of the landmark papers in this regard
\cite{BraunFPS15,CLRS13,FMPTW15,Roth:14}.

Many of the recent lower bounds on the extension complexity of various combinatorial 
polytopes mimic the computational complexity of the underlying problem. For 
example, it is known that the {\em extension complexities of polytopes} related to 
various NP-hard problems are super-polynomial \cite{AT2013,BraunFPS15,FMPTW15,PV13}. 
One satisfying feature of these lower bounds is that they are 
independent of traditional complexity-theoretic assumptions
such as $P\not=NP$.
Though, there also exist polytopes corresponding to polynomial time solvable 
optimization problems whose extension complexity 
is super-polynomial. In particular, the perfect matching polytope was shown to 
have super-polynomial extension complexity by Rothvo{\ss} \cite{Roth:14}. 
Hence even if the extension complexity of a problem mimics its computational
complexity, lower and upper bounds on the former do {\em not\/} follow from the
corresponding computational complexity bounds and constitute nontrivial new
results of independent interest.

One can naturally ask the related questions in the realm of 
{\em parameterized complexity theory}.
In this rapidly grown field each problem instance comes additionally equipped with 
an integer parameter, and the ``efficient'' class denoted by FPT 
({\em fixed-parameter tractable})
is the one of problems solvable, for every fixed value of the parameter,
in polynomial time of degree independent of the parameter.
See Section~\ref{sec:preliminaries} for details.

Similarly as parameterized complexity provides a finer resolution of
algorithmic tractability of problems, parameterized extension complexity
can provide a finer resolution of extension complexities of polytopes of the problems.
We similarly say that a polytope has an {\em FPT extension}
if it has an extension which is,
for every fixed value of the parameter,
of polynomial size with degree independent of the parameter.
Again, see Section~\ref{sec:preliminaries} for details.

We follow this direction of research with a case study of the
{\em independent-set polytope} of a graph, naturally parameterized by the solution size.
We confirm that the extension complexity of the independent-set polytope 
indeed mimics the parameterized computational complexity of the underlying
independent set problem---a finding which is again not implied by the parameterized 
complexity status of this problem and which is actually a lot stronger than
previous related complexity knowledge.
Precisely, we prove:
\begin{itemize}
\item that the independent-set polytope cannot
have an FPT extension for all graphs,
independently of any computational-complexity assumptions
 (Section~\ref{sec:lowerbound}), but
\item linear-sized FPT extensions of the independent-set polytope do exist
on every graph class of bounded expansion (Section~\ref{sec:upperbound}).
\end{itemize}

Seeing the latter result, one may naturally think whether analogous
results hold for other similar problems.
For example, one may consider the polytope of (induced) subgraphs isomorphic to
a given graph $F$, parameterized by the size of~$F$.
Or, more generally, polytopes defined by solutions of non-local problems,
such as the polytope of dominating sets of a certain size.
While ad-hoc adaptations of our technique to such problems are surely possible, 
we prefer to give a ``metatheorem''---a generic solution aimed at all
problems defined in a certain framework.

Namely, we further formulate and prove the following generalizations:
\begin{itemize}
\item there is a natural way to assign a definition of a polytope to every
graph problem expressible in {FO logic},
and these polytopes have linear-sized FPT extensions on every graph class of
bounded expansion when parameterized by the size of the formula expressing
the problem (Section~\ref{sec:generalizations}),
\item for a restricted fragment of FO graph problems,
near-linear-sized FPT extensions of the polytopes exist even on so called
nowhere dense graph classes (Section~\ref{sec:nowheredense}).
\end{itemize}
% 
% We further give, using established tools, a generalization of this result to
% many other similarly defined problems, namely those expressible in
% {\em FO logic} of graphs (Section~\ref{sec:generalizations}). 
% We also consider a more general sort of graph classes -- so called nowhere dense ones
% (Section~\ref{sec:nowheredense}), and
We conclude the paper with some further thoughts and suggestions in
Section~\ref{sec:conclusions}.

% 
% The aim of the present article is not to make the above 
% question precise and study it
% {\footnote{Indeed, it can be argued that if the answer to this is
% affirmative for some $t$ then $FPT\neq W[t]$.  If on the otherhand, the
% answer is no, then this would imply existence of non-uniform FPT circuits
% for $W[t]$-hard problems.  Either of these would be a significant
% breakthrough.}.  

\section{Preliminaries}
\label{sec:preliminaries}

We follow standard terminology of graph theory and consider finite simple
undirected graphs.
We refer to the vertex and edge sets of a graph $G$ as $V(G)$ and $E(G)$,
respectively.
An {\em independent set} $X$ of vertices of a graph is such
that no two elements of $X$ are adjacent.
% $X$ is called a {\em$k$-independent set} if, moreover,~$|X|=k$.
By a {\em cut} in a graph $G$ we mean an edge cut, that is,
an inclusion-wise minimal set of edges $C\subseteq E(G)$ such that
$G\setminus C$ has more connected components than~$G$.

For fundamental concepts of parameterized complexity we refer the readers,
e.g., to the monograph~\cite{DBLP:series/txcs/DowneyF13}.
Here we just very briefly recall the needed notions.
Considering a problem $\cal P$ with input of the form
$(x,k)\in \Sigma^*\times \mathbb N$ (where $k$ is a {\em parameter}),
we say that $\cal A$ is {\em fixed-parameter tractable} (shortly FPT)
if there is an algorithm solving $\cal A$ in time
$f(k)\cdot n^{\Oh(1)}$ where $f$ is an arbitrary computable function.
In the (parameterized) {\em$k$-independent set problem}
the input is $(G,k)$ where $G$ is a graph and $k\in\mathbb N$,
and the question is whether $G$  has an independent set of size at least~$k$.

There is no known FPT algorithm for the $k$-independent set problem
in general and, in fact, the theory of parameterized
complexity~\cite{DBLP:series/txcs/DowneyF13} defines complexity
classes $W[t]$, $t\geq 1$, such that the $k$-independent set problem is
complete for $W[1]$.
Problems that are $W[1]$-hard do not admit an FPT algorithm unless the
Exponential Time Hypothesis fails.

Returning back to graph structure, we shall deal with the concept of 
{\em treewidth} of a graph.
Given a graph~$G$, a \emph{tree-decomposition of $G$} is an ordered pair
$(T,\mathcal{W})$, where~$T$ is a tree and 
$\mathcal{W} = \{W_x \subseteq V(G) \mid x \in V(T)\}$ is a collection
of {\em bags} (vertex sets of~$G$), 
% with one set for each node of the tree~$T$ 
such that the following hold:
\begin{enumerate}\parskip-1ex
    \item $\bigcup_{x \in V(T)} W_x = V(G)$;
    \item for every edge~$e = uv$ in~$G$, there exists~$x \in V(T)$ such that~$u,v \in W_x$;
    \item for each~$u \in V(G)$, the set~$\{x \in V(T) \mid u \in W_x\}$
    induces a subtree of $T$.
\end{enumerate}
The {width} of this tree-decomposition is $\max_{x \in V(T)}|W_x|-1$,
and the treewidth $tw(G)$ of~$G$ is the smallest width of a tree-decomposition of~$G$.

It is worth to note that computing an optimal tree-decomposition of a graph $G$
is linear-time FPT with the parameter~$tw(G)$ \cite{bod96}.

\subsection{Fixed-parameter extension complexity}

The {\em size} of a polytope $P$, denoted by $\size(P)$, is defined to be 
the number of facets of $P$, which is the
minimum number of inequalities needed to describe $P$ if it is full-dimensional.
A polytope $Q$ is called an {\em extension} of a polytope $P$ if $P$ can be 
obtained as a linear projection of $Q$. As a shorthand we will say that in 
this case $Q$ is an EF of~$P$. 
As noted in the Introduction, the following is a useful notion:

\begin{definition}[Extension complexity]
%\say{PH: citation for this definition?}
The {\em extension complexity of a polytope $P$}, denoted by $\xc(P)$, is 
defined to be the size of the smallest extension. More precisely, 
$$\xc(P):= \min_{Q \text{ an EF of } P}\size(Q).$$
\end{definition}

In the context of {\em fixed-parameter extension complexity}, 
we deal with families of polytopes $\mathcal{P}_n$ where $n\in\mathbb N$,
and a parameter $k$. 
For example, for the independent set 
problem parameterized by a nonnegative integer $k$, 
the family $\mathcal{P}_n$ could be the family of $k$-independent set polytopes 
(cf.~Subsection~\ref{sub:ispolytope}) for a family of $n$-vertex graphs.
The prime question is whether there exists a computable function $f$
such that $\xc(P)\leq f(k)\cdot n^{\Oh(1)}$ for all~$k,n$ and all
$P\in\mathcal{P}_n$.
As a shorthand we will say in the affirmative case that the collection of 
families $\{\mathcal{P}_n: n\in\mathbb N\}$ has {\em FPT extension
complexity} (in a natural analogy with the aforementioned FPT complexity class---it
also readily follows that problems with FPT extension complexity can be solved in FPT time if the extension can be efficiently constructed).
% \say{PH: check whether FPP is better, or FPT already established in EF?}

Buchanan in a recent article \cite{B:15} studied the fixed-parameter 
extension complexity of the $k$-vertex cover problem,
and proved that for any graph $G$ with $n$ vertices, 
the $k$-vertex cover polytope has extension complexity at most $\Oh(c^kn)$ 
for some constant $c<2$. Hence this is a nontrivial example of a polytope class with 
FPT extension complexity. Buchanan also raised the question 
whether the $k$-independent set polytope (Definition~\ref{def:ispolytope}) 
admits an FPT extension. 
We answer this in the negative in Theorem~\ref{thm:stab_xc_lb}.
Note that our negative answer does {\em not rely} on any complexity theoretical
assumptions (such as $FPT\not=W[1]$).

We also look at the positive side of the $k$-independent set problem.
It is known that this problem admits an FPT algorithm (w.r.t.~$k$)
on quite rich restricted graph classes, e.g., on classes of bounded
expansion~\cite{NOdM06} (see Subsection~\ref{sub:sparsity} for the definition).
While this finding, in general, does not imply anything about the extension
complexity of the $k$-independent set polytope,
we manage to apply the tools of~\cite{NOdM06} in our setting, and confirm
 -- in Theorem~\ref{thm:bdexptoxc} -- FPT extension complexity of the
$k$-independent set polytope on any graph class of bounded expansion.
We also study a meta-generalization of this result
(to all FO-definable problems) in Section~\ref{sec:generalizations},
and partly generalize our result to nowhere dense graph classes in
Section~\ref{sec:nowheredense}.

\smallskip
In the course of proving aforementioned Theorems~\ref{thm:stab_xc_lb}
and~\ref{thm:bdexptoxc},
we are going to use the following established results on the topic of
extension complexity.

\begin{theorem}[Balas \cite{balas21}]
	\label{thm:union_balas}
	Let $P_1,P_2,\ldots,P_s$ be polytopes and let $P:=\conv(\bigcup_{i=1}^s P_i)$.
	Then, $\xc(P)\leqslant s+\sum_{i=1}^s\xc(P_i).$
\end{theorem}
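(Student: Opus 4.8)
The plan is to reconstruct the classical homogenization (disjunctive-programming) argument underlying Balas's theorem. We may assume every $P_i$ is nonempty, simply discarding any empty ones. First I would fix, for each $i\in\{1,\dots,s\}$, a smallest extension $Q_i$ of $P_i$ together with a linear map $p_i$ with $p_i(Q_i)=P_i$, and write $Q_i=\{\yy:A_i\yy\leqslant b_i\}$, where the system $A_i\yy\leqslant b_i$ uses $\xc(P_i)$ facet-defining inequalities (plus possibly some equality rows). The candidate extension $R$ is then the set of all tuples $(\xx,\yy^1,\dots,\yy^s,\lambda)$ with $\lambda=(\lambda_1,\dots,\lambda_s)$ satisfying
\[
 \textstyle\sum_{i=1}^s\lambda_i=1,\qquad \lambda_i\geqslant0,\qquad A_i\yy^i\leqslant\lambda_i b_i,\qquad \xx=\sum_{i=1}^s p_i(\yy^i),
\]
the second and third conditions being imposed for every $i$ (an equality row $c^\top\yy=d$ of $A_i$ likewise homogenizes to $c^\top\yy^i=\lambda_i d$). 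The intended reading is that $\lambda$ lies in the standard simplex, and each $\yy^i$ lies in the cone over $Q_i$ scaled by $\lambda_i$: this forces $\yy^i/\lambda_i\in Q_i$ when $\lambda_i>0$, and $\yy^i$ into the recession cone of $Q_i$ when $\lambda_i=0$. I then need to verify three things: (i) the coordinate projection onto $\xx$ maps $R$ onto $P$; (ii) $R$ is a polytope; (iii) $R$ has at most $s+\sum_{i=1}^s\xc(P_i)$ facets.

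Item (iii) is essentially by inspection: $R$ is cut out by the $s$ sign constraints $\lambda_i\geqslant0$, by the at most $\sum_i\xc(P_i)$ genuine inequalities coming from the homogenized systems $A_i\yy^i\leqslant\lambda_i b_i$, and by equalities (the rows $\sum_i\lambda_i=1$ and $\xx=\sum_i p_i(\yy^i)$, plus any equality rows of the $A_i$), which only restrict the affine hull and define no facet; hence $\size(R)\leqslant s+\sum_i\xc(P_i)$. For (i), the inclusion $\operatorname{proj}_{\xx}(R)\supseteq P$ follows by taking $\xx=\sum_i\lambda_i z_i$ with $z_i\in P_i$ and $\lambda$ in the simplex, choosing $w_i\in Q_i$ with $p_i(w_i)=z_i$, and setting $\yy^i:=\lambda_i w_i$ (so $\yy^i=0$ whenever $\lambda_i=0$); the resulting tuple lies in $R$ and projects to $\xx$. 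Conversely, given $(\xx,\yy^1,\dots,\yy^s,\lambda)\in R$, for each $i$ with $\lambda_i>0$ we get $\yy^i/\lambda_i\in Q_i$, hence $p_i(\yy^i)=\lambda_i\cdot p_i(\yy^i/\lambda_i)\in\lambda_i P_i$, while for each $i$ with $\lambda_i=0$ the constraint $A_i\yy^i\leqslant0$ forces $\yy^i=0$; therefore $\xx$ is a convex combination of points of the $P_i$, i.e.\ $\xx\in P$.

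The single delicate point — and the only place where the hypothesis that the $P_i$ are \emph{polytopes} rather than general polyhedra is actually used — is the claim that $A_i\yy^i\leqslant0$ implies $\yy^i=0$. This holds because $\{\yy:A_i\yy\leqslant0\}$ is the recession cone of the bounded set $Q_i$, which is $\{0\}$. The same boundedness also settles (ii): each $\lambda_i$ lies in $[0,1]$, each $\yy^i$ equals $0$ or $\lambda_i$ times a point of the bounded set $Q_i$, and $\xx\in P$ is bounded, so $R$ is a bounded, closed, polyhedral set, hence a polytope. Combining (i)--(iii) yields an extension $R$ of $P$ with $\size(R)\leqslant s+\sum_{i=1}^s\xc(P_i)$, and therefore $\xc(P)\leqslant s+\sum_{i=1}^s\xc(P_i)$, as claimed. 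I do not expect any real obstacle here beyond getting this recession-cone/boundedness step stated cleanly.
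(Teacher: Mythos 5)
The paper does not prove this statement; it is quoted as a known result with a citation to Balas, so there is no in-paper argument to compare against. Your reconstruction is the standard disjunctive-programming/homogenization proof, and it is correct: the facet count, the surjectivity of the projection, and the recession-cone step (where boundedness of the $Q_i$, hence the polytope hypothesis, is genuinely used) are all handled properly, and applying the construction to minimal extensions $Q_i$ rather than to the $P_i$ themselves is exactly what yields the bound in terms of $\xc(P_i)$.
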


For a graph $G$, a cut vector is a $0/1$ vector of   
length $|E(G)|$ whose coordinates correspond to whether an edge of $G$
is in a cut $C\subseteq E(G)$ or not.
The {\em cut polytope} is then the convex hull of all the cut vectors of~$G$.
Our negative result relies on the following lower bound.

\begin{theorem}[Fiorini et al.\ \cite{FMPTW15}]\label{thm:xc_cut}
	The extension complexity of the cut polytope of the complete graph
	$K_n$ on $n$ vertices is $2^{\Omega(n)}.$
\end{theorem}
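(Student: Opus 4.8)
The plan is to reduce the statement, via Yannakakis' factorization theorem, to a lower bound on the nonnegative rank of a carefully chosen slack‑type matrix, routing everything through the correlation polytope. First recall that for a polytope $P$ one has $\xc(P)=\nnegrk(S)$ for the slack matrix $S$, and more generally $\xc(P)\geq\nnegrk(M)$ whenever $M$ is a matrix whose $(i,j)$ entry is the slack of some valid inequality $i$ of $P$ at some vertex $j$; also $\nnegrk$ is monotone under taking submatrices. So it suffices to exhibit, for $\CUT(K_n)$, such a matrix $M$ with $\nnegrk(M)=2^{\Omega(n)}$. I would pass to the \emph{correlation polytope} $\mathrm{COR}(m):=\conv\{\,bb^{\mathsf T}:b\in\{0,1\}^m\,\}$, viewed inside the space of symmetric $m\times m$ matrices: by the classical covariance (Fourier) change of variables, $\mathrm{COR}(m)$ is affinely isomorphic to $\CUT(K_{m+1})$, hence $\xc(\CUT(K_{m+1}))=\xc(\mathrm{COR}(m))$, and it is enough to bound $\xc(\mathrm{COR}(m))$ from below with $m=n-1$.

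\textbf{The hard slack matrix.} Next I would write down the matrix. For each $a\in\{0,1\}^m$ the elementary inequality $2t-t^2\leq 1$ with $t=\langle a,b\rangle$ linearizes to a genuine valid inequality $\langle\,2\,\mathrm{Diag}(a)-aa^{\mathsf T},\ X\,\rangle\leq 1$ for $\mathrm{COR}(m)$, whose slack at the vertex $bb^{\mathsf T}$ equals $1-(2\langle a,b\rangle-\langle a,b\rangle^2)=(1-\langle a,b\rangle)^2$ (using $b_i^2=b_i$). Assembling these inequalities as rows and the vertices $bb^{\mathsf T}$ as columns yields the $2^m\times 2^m$ matrix $M$ with $M_{a,b}=(1-\langle a,b\rangle)^2$. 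By the previous paragraph $\xc(\mathrm{COR}(m))\geq\nnegrk(M)$, so the whole theorem reduces to proving $\nnegrk(M)=2^{\Omega(m)}$.

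\textbf{Lower bound on $\nnegrk(M)$ — the crux.} Observe that $M_{a,b}=0$ precisely when $\langle a,b\rangle=1$, so $\suppo(M)$ is the complement of the \textsc{unique‑disjointness} relation. A nonnegative factorization $M=\sum_{t=1}^{r}u_tv_t^{\mathsf T}$ with $u_t,v_t\geq 0$ covers $\suppo(M)$ by the $r$ combinatorial rectangles $\suppo(u_t)\times\suppo(v_t)$, each avoiding the zero set of $M$. The difficulty is that the rectangle covering (Boolean rank) of this support is only polynomial, so the support pattern alone cannot force a superpolynomial $r$; one must exploit the actual magnitudes of the entries. Here I would use the rectangle‑size (corruption) method: fix a suitable probability distribution $\mu$ concentrated on the near‑diagonal pairs $(a,b)$ with $\langle a,b\rangle\in\{0,2\}$, and prove that any single rank‑one nonnegative term $u_tv_t^{\mathsf T}$ can account for at most a $2^{-\Omega(m)}$ fraction of the total $\mu$‑mass that $M$ carries on these pairs — an estimate going back to Razborov's lower bound for disjointness and carried out for real nonnegative factorizations in \cite{FMPTW15} (the hyperplane‑separation lemma is an equivalent route). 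Summing over $t$ then forces $r=2^{\Omega(m)}$, and taking $m=n-1$ gives $\xc(\CUT(K_n))=2^{\Omega(n)}$.

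\textbf{Main obstacle.} The one genuinely hard step is the last one: lifting a polynomial rectangle‑covering bound to an exponential nonnegative‑rank bound. Everything before it — Yannakakis' theorem, the affine identification of $\mathrm{COR}(m)$ with $\CUT(K_{m+1})$, and the validity and slack computation for the quadratic inequalities — is routine bookkeeping. The exponential lower bound on $\nnegrk(M)$ requires a quantitative counting argument over a hard distribution that is sensitive to the real‑valued entries of $M$ rather than merely to its $0/1$ support, and this is the technical heart of \cite{FMPTW15}.
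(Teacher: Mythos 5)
This theorem is not proved in the paper at all: it is imported verbatim from \cite{FMPTW15} as a black box (it is the external ingredient behind Corollary~3.5 and Theorem~3.6), so there is no internal proof to compare your sketch against. Measured against the actual argument in the cited reference, your outline follows the standard route: pass to the correlation polytope via the covariance map, use the valid inequalities $\langle 2\,\mathrm{Diag}(a)-aa^{\mathsf T},X\rangle\le 1$ whose slack at the vertex $bb^{\mathsf T}$ is $(1-\langle a,b\rangle)^2$, and lower-bound the nonnegative rank of the resulting $2^m\times 2^m$ matrix by a Razborov-style counting argument. The steps you call routine bookkeeping (Yannakakis, the affine isomorphism $\mathrm{COR}(m)\cong\CUT(K_{m+1})$, validity and the slack computation) are all correct.

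One claim in your crux paragraph is false, however, and it misplaces where the difficulty lies. The rectangle covering number of $\suppo(M)$ is \emph{not} polynomial; it is $2^{\Omega(m)}$, and this support-only bound is precisely how \cite{FMPTW15} prove the theorem. Razborov's corruption lemma applies to \emph{every} rectangle $R$, asserting $\mu_1(R)\ge \tfrac18\,\mu_0(R)-2^{-\delta m}$ for suitable distributions $\mu_0,\mu_1$ supported on the pairs with $\langle a,b\rangle=0$ and $\langle a,b\rangle=1$ respectively; a rectangle contained in $\suppo(M)$ has $\mu_1(R)=0$, hence carries only a $2^{-\Omega(m)}$ fraction of the $\mu_0$-mass, and exponentially many rectangles are needed just to cover the disjoint pairs. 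So the support pattern alone \emph{does} force $r=2^{\Omega(m)}$ here; the scenario you describe---a polynomial rectangle covering that forces one to exploit entry magnitudes via the hyperplane-separation method---is the situation for the perfect matching polytope \cite{Roth:14}, not for $\CUT(K_n)$. Relatedly, the hard distribution should live on pairs with $\langle a,b\rangle\in\{0,1\}$ (the zeros of $M$ are the corrupting entries), not $\{0,2\}$. Since you defer the quantitative lemma to \cite{FMPTW15} anyway, this does not invalidate the overall plan, but as written your description of the key step points at the wrong lemma and would send you down an unnecessary (magnitude-sensitive) path for this particular polytope.
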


% For upper bound we will use the following result of Buchanan and Butenko \cite{BB:14} about 
% extension complexity of independent set polytope on bounded treewidth graph.
% For lower bound we will use the following result of Fiorini et al.\
% \cite{FMPTW15} about extension complexity of the cut polytope.
% We will also use the following result of Balas \cite{balas21} about the extension complexity 
% of the convex hull of a union of polytopes.

\subsection{The $k$-independent set polytope}
\label{sub:ispolytope}

Let $G$ be a graph on $n$ vertices. Every subset of vertices of $G$ can 
be encoded as a characteristic vector of length $n$. That is, for a subset $S
\subseteq V$, define the characteristic vector $\chi^S$ as follows: $$\chi^S_v
=\left\{\begin{array}{ll}1 & \text{ if } v\in S\\ 0 & \text{ otherwise }\end{array}
\right.$$
\begin{definition}[Independent set polytope]\label{def:ispolytope}
The {\em$k$-independent set polytope of $G$}, denoted by $\STAB_k^\leqslant(G)$, is defined 
to be the convex hull of the characteristic vectors of every independent set of 
size  at most $k$. That is, $$\STAB\nolimits_k^\leqslant(G)=\conv\left(\left\{\chi^S\in\{  0,1\}
^n |\> S\subseteq V \text{ is an independent set of } G; |S|\leqslant k\right\}\right). 
$$ 
In case that $k=n$ we simply speak about the {\em independent set polytope
of~$G$}\,; $\STAB(G)$.
\end{definition}

Alternatively, one could define the polytope $\STAB_k(G)$ to be the convex 
hull of all independent sets of size exactly equal to $k$. That is, $$
\STAB\nolimits_k(G)=\conv\left(\left\{\chi^S\in\{  0,1\} ^n |\> S\subseteq V 
\text{ is an independent set of } G; |S| = k\right\}\right). $$ 
To simplify our situation, we note the following:

\begin{lemma}\label{lem:stabequalnon}
 $\xc(\STAB_k(G))\leqslant \xc(\STAB_k^\leqslant(G))\leqslant k+\sum_{i=0}^k\xc(\STAB_i(G)).$
\end{lemma}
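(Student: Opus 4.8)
The statement has two inequalities. For the first, $\xc(\STAB_k(G))\leqslant \xc(\STAB_k^\leqslant(G))$, the plan is to exhibit $\STAB_k(G)$ as a face of $\STAB_k^\leqslant(G)$, since the extension complexity of a face never exceeds that of the whole polytope. Indeed, summing the coordinates of a characteristic vector $\chi^S$ gives exactly $|S|$, so the hyperplane $\sum_{v\in V(G)} x_v = k$ is a valid supporting hyperplane of $\STAB_k^\leqslant(G)$ (all characteristic vectors of independent sets of size at most $k$ satisfy $\sum_v x_v \leqslant k$), and the face it cuts out is precisely the convex hull of those $\chi^S$ with $|S| = k$, i.e. $\STAB_k(G)$. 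Then I would invoke the standard fact that if $F$ is a face of $P$ and $Q$ is an EF of $P$, then the preimage of $F$ under the projection (equivalently, the intersection of $Q$ with the corresponding affine subspace lifted back) is an EF of $F$ with no more facets than $Q$; hence $\xc(F)\leqslant\xc(P)$.

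For the second inequality, $\xc(\STAB_k^\leqslant(G))\leqslant k+\sum_{i=0}^k \xc(\STAB_i(G))$, the plan is to write $\STAB_k^\leqslant(G)$ as a convex hull of a union and apply Balas' theorem (Theorem~\ref{thm:union_balas}). Observe that
\[
\STAB\nolimits_k^\leqslant(G) = \conv\Bigl(\bigcup_{i=0}^{k} \STAB\nolimits_i(G)\Bigr),
\]
because every independent set of size at most $k$ has size exactly $i$ for some $i\in\{0,1,\dots,k\}$, so the vertex set of the left-hand polytope is the union of the vertex sets of the $\STAB_i(G)$. This is a union of $s := k+1$ polytopes, so Balas gives $\xc(\STAB_k^\leqslant(G)) \leqslant (k+1) + \sum_{i=0}^{k}\xc(\STAB_i(G))$. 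This is very slightly weaker than the claimed bound $k + \sum_{i=0}^k \xc(\STAB_i(G))$; the extra $+1$ is shaved off by noting that $\STAB_0(G)$ is a single point (the empty set, $\chi^\emptyset = \bm{0}$), which contributes $\xc(\STAB_0(G)) = 0$ and, being just a point, does not require the ``selector'' overhead in Balas' construction — effectively we are taking the convex hull of $k$ nontrivial polytopes together with one extra vertex, so the bound tightens to $k + \sum_{i=0}^{k}\xc(\STAB_i(G))$.

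The only mildly delicate point — the ``main obstacle'', such as it is — is the bookkeeping in the second step: making sure the off-by-one in Balas' bound is handled correctly, i.e. justifying that adjoining the single point $\bm{0}$ to a convex hull costs $1$ rather than $2$ in the facet count (one selector variable, or equivalently treating the point as the ``origin'' cone apex). Everything else is routine: the face argument for the first inequality is standard, and the union decomposition for the second is immediate from the definitions. One should also note the degenerate edge cases (e.g. $G$ has no independent set of size $k$, so $\STAB_k(G) = \emptyset$, whence $\xc = 0$ by convention) do not cause trouble, since empty polytopes simply drop out of both the face argument and the Balas union.
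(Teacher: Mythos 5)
Your proof is correct and follows the same route as the paper: the first inequality via the face argument (with $\sum_{v}x_v=k$ as the supporting hyperplane), the second via Balas' theorem applied to the decomposition $\STAB_k^\leqslant(G)=\conv\bigl(\bigcup_i \STAB_i(G)\bigr)$. You are in fact slightly more careful than the paper, which writes the union starting at $i=1$ (thereby omitting the origin $\chi^\emptyset$) and does not fuss over the $\pm1$ in the additive constant, remarking immediately afterwards that only the asymptotics of this bound are ever used.
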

\begin{proof}
Clearly, $\STAB_k(G)$ is a face of $\STAB_k^\leqslant(G)$. 
Therefore, $\xc(\STAB_k(G))\leqslant\xc(\STAB_k^\leqslant(G)).$

On the other hand, $\STAB_k^\leqslant(G)=\conv(\bigcup_{i=1}^{k}\STAB_i(G)),$ and 
therefore $\xc(\STAB_k^\leqslant(G))\leqslant k+\sum_{i=0}^k\xc(\STAB_i(G))$
by Theorem~\ref{thm:union_balas}.
\end{proof}

We would like to remark that the above Lemma~\ref{lem:stabequalnon} shows that any bounds (lower or upper) 
that are valid for $\xc (\STAB_k(G))$ are also asymptotically valid for $\xc(\STAB_k^\leqslant(G)).$ 
Therefore, the notation $\STAB_k(G)$ can be freely used to mean either of the polytopes without
affecting any lower or upper bounds asymptotically.

We shall also use the following result.
% \say{PH: what `independent set polytope', for which $k$?}
\begin{theorem}[Buchanan and Butenko \cite{BB:14}]
%\say{PH: repeat the names of \cite{BB:14}?}\marginpar{\textcolor{blue}{I have had reviewers complain about having names in theorem}}
	\label{thm:tw_low_xc}
        The extension complexity of a graph's independent set polytope
        is $\Oh(2^{tw} n)$, where $n$ is the number of vertices and 
	$tw$ denotes its treewidth.
\end{theorem}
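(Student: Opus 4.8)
The natural approach is to convert the standard dynamic program for independent sets over a tree decomposition into a compact extended formulation, following the general ``dynamic programming yields small extensions'' paradigm. First I would fix a tree decomposition $(T,\mathcal{W})$ of $G$ of width $tw$, and --- using the linear-time FPT algorithm of Bodlaender~\cite{bod96} together with a routine normalization --- assume that it is \emph{nice} and has $|V(T)|=\Oh(n)$ nodes. For a node $x\in V(T)$, let $\mathcal{I}_x$ be the family of independent subsets of the bag $W_x$; since $|W_x|\le tw+1$ we have $|\mathcal{I}_x|\le 2^{tw+1}$. The extension $Q$ will live in the space of the original variables $z\in\RR^{V(G)}$ together with fresh variables $y_{x,S}$ for all $x\in V(T)$ and $S\in\mathcal{I}_x$, where $y_{x,S}$ is meant to carry the weight of the (convex combination of) independent sets whose trace on $W_x$ equals~$S$.

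On $Q$ I would impose the constraints: (i)~$y_{x,S}\ge 0$ for all $x,S$; (ii)~$\sum_{S\in\mathcal{I}_x}y_{x,S}=1$ for each node $x$; (iii)~for each edge $xx'$ of $T$, with $Z:=W_x\cap W_{x'}$, and each independent $R\subseteq Z$, the marginal-agreement equation
\[
\sum_{S\in\mathcal{I}_x,\ S\cap Z=R} y_{x,S}\;=\;\sum_{S'\in\mathcal{I}_{x'},\ S'\cap Z=R} y_{x',S'};
\]
and (iv)~for each $v\in V(G)$, fixing one node $x(v)$ with $v\in W_{x(v)}$, the projection equation $z_v=\sum_{S\in\mathcal{I}_{x(v)},\ v\in S}y_{x(v),S}$. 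The total number of constraints is $\Oh(2^{tw})$ per node of $T$ and $\Oh(2^{tw})$ per edge, hence $\Oh(2^{tw}n)$, which yields the claimed bound on $\xc(\STAB(G))$ once we verify that the coordinate projection of $Q$ onto the $z$-space is exactly $\STAB(G)$.

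The inclusion $\STAB(G)\subseteq\pi(Q)$ (with $\pi$ the projection to $z$) is immediate: an independent set $S\subseteq V(G)$ lifts to the $0/1$ point with $y_{x,S\cap W_x}=1$ for every $x$, all other $y$-coordinates zero, and $z=\chi^S$, which satisfies (i)--(iv); the general case follows by convexity. The reverse inclusion $\pi(Q)\subseteq\STAB(G)$ is the substantive part. Given $(y,z)\in Q$, I would root $T$ and read the $y_{x,\cdot}$ as locally consistent pseudo-marginals; because $T$ is a tree and $\mathcal{W}$ has the running-intersection property, these can be assembled into a genuine probability distribution $\mu$ over independent sets of $G$ by a top-down procedure --- sample the trace on the root bag with probabilities $y_{r,\cdot}$, then extend to each child using the conditional marginals furnished by (iii), and iterate. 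One then checks that every edge of $G$ lies in some bag, so the local independence built into $\mathcal{I}_x$ forces $\mu$ to be supported on independent sets, and that the single-vertex marginals of $\mu$ equal $z$ by (iv) combined with (iii); hence $z$ is a convex combination of characteristic vectors of independent sets, i.e.\ $z\in\STAB(G)$.

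The main obstacle is precisely this realizability step: proving that local consistency over a tree-structured decomposition propagates to a globally consistent distribution, and in particular that a vertex re-appearing in far-apart bags cannot be ``used'' inconsistently. This is exactly where the tree structure of $T$ (as opposed to a general ``local polytope'' over a graph with cycles) is essential, and it requires careful bookkeeping with introduce/forget/join nodes and the running-intersection property; the fact that the choice of representative node $x(v)$ in (iv) is immaterial likewise follows from (iii) by induction along the (connected) subtree of bags containing~$v$. A more self-contained alternative that avoids the probabilistic language is to establish $\pi(Q)\subseteq\STAB(G)$ by induction on $|V(T)|$ using Balas's union theorem (Theorem~\ref{thm:union_balas}): peel off a leaf bag, express the relevant faces of $\STAB(G)$ as convex hulls of products of the corresponding faces for the two parts of the decomposition, and recombine --- the accounting again giving $\Oh(2^{tw}n)$.
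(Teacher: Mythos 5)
This theorem is not proved in the paper at all --- it is imported verbatim from Buchanan and Butenko \cite{BB:14}, with only the remark that their extension is explicit given a tree-decomposition. Your reconstruction is correct and is essentially their argument (the standard ``dynamic programming over a tree-decomposition yields a compact extended formulation'' construction): the nonnegativity constraints number $\Oh(2^{tw})$ per node of a tree-decomposition with $\Oh(n)$ nodes, and the realizability step you rightly single out is the classical junction-tree fact that locally consistent bag marginals on a tree, via the running-intersection property, assemble into a genuine distribution over sets that are independent in every bag and hence in $G$. Two minor remarks: niceness of the decomposition is not needed (any decomposition with no bag contained in a neighboring bag already has at most $n$ nodes, which keeps the count at $\Oh(2^{tw}n)$), and in the top-down sampling one should handle separator traces of marginal zero by convention, which is routine.
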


Note that Buchanan and Butenko give an explicit description of an extension
of the independent set polytope, provided the corresponding tree-decomposition is given.

\subsection{Sparsity and bounded expansion}
\label{sub:sparsity}

A useful toolbox in our research is the theory of sparse graph classes,
largely developed by Ne\v{s}et\v{r}il and Ossona de Mendez.
We follow their monograph~\cite{NOdM12}.

We start by defining the notion of edge contraction. Given an edge~$e = uv$ of a graph~$G$, we let~$G/e$ denote the graph obtained
from~$G$ by \emph{contracting} the edge~$e$, which amounts to deleting the
endpoints of~$e$, introducing a new vertex~$w_{e}$ and making it adjacent to
all vertices in the union of the neighborhoods of $u$ and $v$
(excluding $u,v$ themselves).
% By contracting $e = uv$ to the vertex~$w$, we mean that the vertex~$w_{uv}$ is renamed
% as~$w$. \emph{Subdividing} an edge is, in a sense, an opposite operation to contraction.
% A graph $G$ is called a \emph{$\leq\!k$-subdivision} of a graph $H$ if
% (some) edges of $H$ are replaced by paths of length at most~$k+1$.
% 
A \emph{minor} of~$G$ is a graph obtained from a subgraph of~$G$ by contracting 
zero or more edges.
In a more general view, if $H$ is isomorphic to a minor of~$G$, then we call
$H$ a minor of $G$ as well, and we write~$H \preceq G$.
% A graph $G$ is {\em $H$-minor-free} if $H \not\preceq G$. 

Alternatively, $H$ is a minor of $G$ if there exists a bijection 
$\psi \colon V(H) \rightarrow \{V_1, \ldots, V_p\}$
where $V_1,\ldots,V_p$ are pairwise disjoint subsets of $V(G)$ inducing
connected subgraphs of~$G$,
and $uv\in E(H)$ only if there is an edge in~$G$ 
with an endpoint in each of $\psi(u)$ and $\psi(v)$.
If, moreover, it is required that each subgraph $G[V_i]$ has radius at most~$d$, 
meaning that there exist~$c_i \in V_i$ (a {center}) such that 
every vertex in~$V_i$ is within distance at most~$d$ from $c_i$ in~$G[V_i]$;
then $H$ is called a \emph{shallow minor at depth~$d$} of~$G$
(shortly, a \emph{$d$-shallow minor}).

% We next introduce the notion of a shallow minor.
% \begin{definition}[Shallow minor~\cite{NOdM12}] 
% For an integer~$d$, a graph~$H$ is a \emph{shallow minor at depth~$d$} of~$G$ if there
% exists a set of disjoint subsets $V_1, \ldots, V_p$ of~$V(G)$ such that
% \begin{enumerate}
% \item each graph $G[V_i]$ has radius at most~$d$, meaning that there exists~$v_i \in V_i$
% (a \emph{center}) such that every vertex in~$V_i$ is within distance at most~$d$ in~$G[V_i]$;
% \item there is a bijection $\psi \colon V(H) \rightarrow \{V_1, \ldots, V_p\}$ such that for $u,v \in V(H)$,
% $uv \in E(H)$ only if there is an edge in~$G$ with an endpoint each in $\psi(u)$ and $\psi(v)$.
% %% not IFF here since a shallow minor may miss some edges
% \end{enumerate}
% \end{definition}

Note that if $u,v \in V(H)$ in a $d$-shallow minor, and
$u_1\in\psi(u)$ and $v_1\in\psi(v)$, then $d_G(u_1,v_1) \leq (2d+1) \cdot d_H(u,v)$. 
The class of $d$-shallow minors of~$G$ is denoted by $G \gradd d$,
and this is extended to graph classes~$\mathcal{G}$ as well;
 $\mathcal{G} \gradd d = \bigcup_{G \in \mathcal{G}} G \gradd d$. 

% Let us recall the main definitions pertaining to the notion of graphs of
% bounded expansion. We follow the recent book by Ne\v{s}et\v{r}il and Ossona de Mendez~\cite{NOdM12}.
% \begin{definition}[Greatest reduced average density (grad)~\cite{NOdM08,NMW12}]
% 	Let $\cal G$ be a graph class. Then the \emph{greatest reduced average density} of $\cal G$
% 	with \emph{rank $d$} is defined as 
% 	\[
% 		\gradd_d(\mathcal{G}) = \sup_{H \in \mathcal{G} \gradd d} \frac{|E(H)|}{|V(H)|}.
% 	\]
% \end{definition}
% \noindent This notation is also used for graphs via the convention that $\nabla_d(G) := \nabla_d(\{G\})$.
% In particular, note that $G \nabla 0$ denotes the set of subgraphs of $G$ and hence $2\nabla_0(G)$ is 
% the maximum average degree of all subgraphs of~$G$.
% The \emph{degeneracy} of~$G$ is, therefore, exactly~$2\nabla_0(G)$. 

One of the most prominent \cite{NOdM12} notions of ``sparsity'' for graph classes
is the following one:
\begin{definition}[Grad and bounded expansion~\cite{NOdM08}]\label{def:bdexpansion}
 	Let $\cal G$ be a graph class. Then the \emph{greatest reduced average density} of $\cal G$
 	with \emph{rank $d$} is defined as 
 	$$
 		\nabla_d(\mathcal{G}) = \sup_{H \in \mathcal{G} \nabla d} \frac{|E(H)|}{|V(H)|}.
 	$$
	A graph class $\mathcal{G}$ has \emph{bounded expansion} if there exists a
	function $f : \mathbb{N} \rightarrow \mathbb{R}$ (called the \emph{expansion function}) such 
	that for all $d \in \mathbb{N}$, $\nabla_d(\mathcal{G}) \leq f(d)$.
\end{definition}

We provide a brief informal explanation of Definition~\ref{def:bdexpansion}.
A graph to be considered ``sparse'' should not, in particular,
contain subgraphs with relatively many edges.
Since $G \gradd 0$ is the set of all subgraphs of $G$, this is captured by
$2\nabla_0(G)$ being the maximum average degree over all subgraphs of~$G$.
However, the definition requires more;
even after contracting edges up to limited depth~$d$,
the resulting shallow minors stay free of relatively dense subgraphs,
with the maximum average degree bounded by $2\nabla_d(\mathcal{G}) \leq2f(d)$.

For example, the class $\cal P$ of all planar graphs has bounded expansion
(even with a constant expansion function).
On the other hand, a class $\cal Q$ obtained from all cliques by subdividing
each edge twice, although also having relatively few edges,
does not have bounded expansion since ${\cal Q}\gradd 1$ contains all graphs.

\section{Lower Bound: Paired Local-Cut Graphs} 
\label{sec:lowerbound}

% \subsection*{Paired Local-Cut Graphs}
In this section we deal with a specially crafted graph for a
lower-bound reduction for the $k$-independent set polytope.
We use a shorthand notation $[n]=\{1,2,\dots,n\}$.

\begin{definition}[Paired local-cut graph]\label{def:PLC}
Given positive integers $k$ and~$n$,
% let $r=k\left\lfloor\log{n}\right\rfloor.$ 
let a \emph{Paired Local-Cut Graph}, denoted by $\PLC(k,n)$,
be defined as follows:
\begin{enumerate}
\item We create $k\cdot2^{ \left\lfloor\log{n} \right\rfloor}$ vertices
labeled with tuples $(i,S)$ for $i\in[k]$
and $S\subseteq \left[\left\lfloor{\log{n}}\right\rfloor\right]$.
These vertices will be called \emph{cut vertices}.
Then we create $k(k-1)\cdot2^{2\left\lfloor\log{n}\right\rfloor}$    
vertices labeled with tuples $(i,j,S_1,S_2)$ where $1\leqslant i
\neq j \leqslant k $ and $S,S' \subseteq [{\left\lfloor\log{n} 
\right\rfloor}].$ These vertices will be called \emph{pairing vertices}.
\item \label{it:defPLCcliq}
For each index $i\in[k]$,
we add edges between every pair of cut nodes that have labels $(i,S_1)$
and $(i,S_2)$, where $S_1,S_2 \subseteq
\left[\left\lfloor{\log{n}}\right\rfloor\right]$ are arbitrary.
For each index pair $i,j\in[k]$,
we add edges between every pair of pairing nodes that have labels
$(i,j,S_1,S'_1)$ and $(i,j,S_2,S'_2)$, where $S_1,S_2, S_1',S_2' \subseteq
\left[\left\lfloor{\log{n}}\right\rfloor\right]$ are arbitrary.
\item \label{it:defPLCpair}
Finally, let $u$ be a cut vertex labeled $(i,S)$ and let $v$ be a
pairing vertex labeled $(j_1,j_2,S_1,S_2)$. 
If $i=j_1$ but $S\neq S_1$, then we add the edge $uv$.
Symmetrically, if $i=j_2$ but $S\neq S_2$, then we also add the edge $uv$.
\end{enumerate}
\end{definition}
% 
% First we create $k2^{ \left\lfloor\log{n} \right\rfloor}$ vertices 
% labeled with tuples $(i,S)$ for $i\in[k]$
% and $S\subseteq \left[\left\lfloor{\log{n}}\right\rfloor\right]$. 
% These vertices will be called \emph{cut vertices}. 
% Then we create $2{k \choose 2}2^{2\left\lfloor\log{n}\right\rfloor}$ 
% vertices labeled with tuples $(i,j,S_1,S_2)$ where $1\leqslant i
% \neq j \leqslant k $ and $S_1, S_1 \subseteq [{\left\lfloor\log{n} 
% \right\rfloor}].$ These vertices will be called \emph{pairing vertices}.
% 
% We add edges to these vertices of $\PLC(k,n)$ as follows. For each fixed $i\in[k]$
% we add the edges between all cut nodes that have labels $(i,S)$. Furthermore, 
% for each fixed pair $i,j\in[k]$ we add the edges between all 
% pairing nodes that have labels $(i,j,S_1,S_2)$. 
% Finally, let $u$ be a cut vertex labeled $(i,S)$ and let $v$ be a pairing vertex 
% labeled $(j_1,j_2,S_1,S_2)$. 
% If $i=j_1$ but $S\neq S_1 $ we add edge $uv$. 
% Symmetrically, if $i=j_2$ but $S\neq S_2$ we add edge $uv$.
% 
For ease of exposition we will identify vertices of $\PLC(k,n)$ with their 
labels whenever convenient. 
We first state two easy claims.

\begin{observation}\label{obs:plc_size}
The number of vertices of the graph $\PLC(k,n)$ equals 
{$k(k-1)\cdot2^{2\left\lfloor\log{n}\right\rfloor}
 +k\cdot2^{ \left\lfloor\log{n} \right\rfloor} \leqslant(kn)^2$}. 
\end{observation}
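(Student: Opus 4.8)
The plan is to bound the number of vertices of $\PLC(k,n)$ by simply adding up the two groups of vertices introduced in items~1 of Definition~\ref{def:PLC}, namely the cut vertices and the pairing vertices, and then to overestimate the resulting sum by the convenient closed form $(kn)^2$.

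First I would recall that there are exactly $k\cdot 2^{\lfloor\log n\rfloor}$ cut vertices (one for each pair $(i,S)$ with $i\in[k]$ and $S\subseteq[\lfloor\log n\rfloor]$, and there are $2^{\lfloor\log n\rfloor}$ such subsets $S$), and exactly $k(k-1)\cdot 2^{2\lfloor\log n\rfloor}$ pairing vertices (one for each admissible tuple $(i,j,S_1,S_2)$ with $i\ne j$ in $[k]$, which gives $k(k-1)$ index choices, and $2^{\lfloor\log n\rfloor}$ independent choices for each of $S_1$ and $S_2$). Since these two vertex sets are disjoint, the total is $k(k-1)\cdot 2^{2\lfloor\log n\rfloor}+k\cdot 2^{\lfloor\log n\rfloor}$, which is the first claimed expression.

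For the inequality, I would use $2^{\lfloor\log n\rfloor}\leqslant n$, so that $2^{2\lfloor\log n\rfloor}\leqslant n^2$. Then $k(k-1)\cdot 2^{2\lfloor\log n\rfloor}+k\cdot 2^{\lfloor\log n\rfloor}\leqslant k(k-1)n^2 + kn \leqslant k^2 n^2 = (kn)^2$, where the last step uses $k(k-1)n^2 + kn\leqslant k^2 n^2 - k n^2 + k n^2 = k^2 n^2$ (valid since $kn\leqslant kn^2$). This completes the estimate.

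This observation is essentially a counting exercise, so I do not anticipate any genuine obstacle; the only mild subtlety is to be slightly careful with the floor in the exponent and to make sure the two vertex classes are counted without overlap, but both are immediate from the definition.
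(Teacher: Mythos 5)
Your proof is correct and matches what the paper intends: the observation is stated without proof precisely because it is the immediate count of the two disjoint vertex classes from Definition~\ref{def:PLC}, followed by the bound $2^{\lfloor\log n\rfloor}\leqslant n$ and the estimate $k(k-1)n^2+kn\leqslant k^2n^2$. Nothing is missing.
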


% \begin{observation}\label{obs:plc_unique_subsets} Let $(i,S)$ and $(j_1,j_2,S_
% 1,S_2)$ be two vertices of $\PLC(k,n)$ that are not joined by an edge. If $i=j
% _1$ then $S=S_1$, and if $i=j_2$ then $S=S_2.$
% *************** redundant ??????? better to provide a label for the
%  definition of PLC(k,n) and refer to it later
% \end{observation}
% 
% This together with the next lemma will ensure that in any independent set $I$ 
% of $\PLC(k,n)$ that has size $k^2$, every index $i\in[k]$ can 
% be uniquely associated with a subset $S_i\subseteq \left\lfloor\log{n}
% \right\rfloor$. 
% ***************** misleading ?????????

\begin{lemma}\label{lem:plc_stab_size} Let $I$ be an independent set in $\PLC(
k,n)$. Then, $|I|\leqslant k^2.$ Moreover, equality holds if and only if  $I$ 
contains exactly one cut vertex for each $1 \leqslant i\leqslant k$ and 
exactly one pairing vertex for each $1\leqslant i \not= j\leqslant k.$ 
\end{lemma}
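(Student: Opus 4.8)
The plan is to analyze the structure of $\PLC(k,n)$ through its two vertex types and the clique constraints built into item~\ref{it:defPLCcliq} of Definition~\ref{def:PLC}. First I would observe that the cut vertices split into $k$ groups, where group $i$ consists of all vertices labeled $(i,S)$, and by item~\ref{it:defPLCcliq} each such group is a clique; hence any independent set contains \emph{at most one} cut vertex from each of these $k$ groups, contributing at most $k$ cut vertices in total. Symmetrically, the pairing vertices split into $k(k-1)$ groups indexed by ordered pairs $(i,j)$ with $i\neq j$ (or, if one prefers unordered pairs, $\binom{k}{2}$ groups---I would need to be careful here about whether the definition intends ordered or unordered index pairs, and match the count claimed in the lemma), and each group is again a clique, so an independent set contains at most one pairing vertex per group. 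Summing, $|I| \leqslant k + k(k-1) = k^2$, which gives the bound. (If the intended reading is unordered pairs the bound becomes $k + \binom{k}{2}$; the main point is that the same counting argument applies, and the statement as written corresponds to the ordered-pair reading, so I would go with that.)

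For the ``moreover'' part, the easy direction is immediate: if $|I| = k^2$ then, since $I$ can contain at most one vertex from each of the $k$ cut-cliques and at most one from each of the $k(k-1)$ pairing-cliques, and $k + k(k-1) = k^2$, equality forces $I$ to contain \emph{exactly} one vertex from every one of these cliques. The converse direction---showing that choosing exactly one vertex from each clique can actually be done while keeping the whole set independent---is the step I expect to require the most care, because it is \emph{not} automatic: the cross edges introduced in item~\ref{it:defPLCpair} between cut vertices and pairing vertices could in principle obstruct such a selection. So I would need to exhibit at least one valid choice, e.g.\ fix a single subset $S^\star \subseteq [\lfloor \log n\rfloor]$, take the cut vertex $(i,S^\star)$ for every $i\in[k]$ and the pairing vertex $(i,j,S^\star,S^\star)$ for every ordered pair $i\neq j$; then for any cut vertex $(i,S^\star)$ and pairing vertex $(j_1,j_2,S^\star,S^\star)$, the condition in item~\ref{it:defPLCpair} for an edge requires $i=j_1$ with $S^\star \neq S^\star$ (impossible) or $i=j_2$ with $S^\star\neq S^\star$ (impossible), so no cross edge is present, and by construction no two chosen cut vertices lie in the same cut-clique and no two chosen pairing vertices lie in the same pairing-clique, so this set is independent of size exactly $k^2$. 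This shows equality is achievable, and combined with the forcing argument it shows that the independent sets of size $k^2$ are \emph{exactly} those that pick one vertex from each of the $k$ cut-cliques and each of the $k(k-1)$ pairing-cliques.

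The only genuine subtlety, and the part I would write out carefully, is the claim in the lemma that every such ``one-per-clique'' selection is independent---i.e.\ that the characterization of size-$k^2$ independent sets is in terms of the clique structure alone, with the cross edges from item~\ref{it:defPLCpair} never actually forcing a conflict. Here I would argue that if $I$ contains exactly one cut vertex per index $i$, say $(i, T_i)$, and exactly one pairing vertex per ordered pair $(i,j)$, say $(i,j, T'_{i,j}, T'_{j,i})$ or similar, then independence across the cut/pairing boundary is equivalent to a consistency condition relating the $T_i$'s to the pairing labels (namely that the pairing vertex for the pair involving $i$ must use the subset $T_i$ in the relevant coordinate), and hence not every selection is independent---so I should double-check whether the lemma as stated really claims ``if and only if'' for \emph{all} one-per-clique selections, or only that size-$k^2$ sets \emph{have} this form. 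Reading the statement, it claims equivalence with ``contains exactly one cut vertex for each $i$ and exactly one pairing vertex for each $i\neq j$,'' so I would verify that the forward implication (size $k^2 \Rightarrow$ this structure) is the content that matters, and that the backward implication is meant only as the observation that such a configuration \emph{can} be realized; in the proof I would therefore emphasize the forcing direction (which is the clean counting argument above) and supply the explicit $S^\star$ construction to witness that the maximum $k^2$ is attained.
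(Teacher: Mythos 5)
Your proof is correct and uses the same clique-counting argument as the paper's: the $k$ cut-cliques (fixed index $i$) and the $k(k-1)$ pairing-cliques (fixed ordered pair $(i,j)$) each contribute at most one vertex, giving $|I|\leqslant k+k(k-1)=k^2$. Your worry about the backward direction is unnecessary, since $I$ is already assumed to be an independent set in the hypothesis, so ``exactly one vertex per clique'' yields $|I|=k^2$ by counting alone and no realizability witness is needed for this lemma (though your $S^\star$ construction is a valid one and such sets are indeed needed later, in Lemma~\ref{lem:plc_vs_cuts}).
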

\begin{proof} By Definition~\ref{def:PLC}.\,\ref{it:defPLCcliq},
the set $I$ can contain at most $k$ cut vertices---at most one 
vertex $(i, S)$ where $S\subseteq \left[\left\lfloor{\log{n}}\right\rfloor\right]$
for each $1\leqslant i \leqslant k.$ Also, $I$ can contain 
at most $k(k-1)=k^2-k$ pairing vertices---at most one vertex $(i,j,S,S')$
 for each ordered pair $1\leqslant i\not=j \leqslant k.$
\end{proof}

In subsequent Lemma~\ref{lem:plc_vs_cuts}, we will
relate the vertices of $\STAB_{k^2}(\PLC(k,n))$ with the vertices 
of the {\em polytope $\CUT(K_r)$} where $r=k\left\lfloor\log{n}\right\rfloor$,
to be defined as follows. 
% We denote the vertices and edges of the complete graph $K_r$
% (on $r$ vertices) by $V_r$ and $E_r$, respectively. 

We group the $r$ vertices of the complete graph $K_r$ into $k$ groups, 
each of size $\left\lfloor\log{n} \right\rfloor$, 
and label the vertices as $v^i_j$ where $1\leqslant i\leqslant k$
    and $1\leqslant j \leqslant \left\lfloor\log{n}\right\rfloor.$ 
We also order the vertices lexicographically according to their labels.
A cut vector of $K_r$, corresponding to a cut $C$, is a $0/1$ vector of 
length $|E(K_r)|={r \choose 2}$  whose coordinates correspond to whether 
an edge of $K_r$ is in the cut $C$ or not. 
The edges of $K_r$ are labeled with pairs $(i
_ 1  ,j _1,i_2,j_2)$ where $1 \leqslant i_1, i_2 \leqslant k~;~ 1\leqslant j_1
 ,j _ 2 \leqslant \left\lfloor\log{n}\right\rfloor,$ and $(i_1,j_1) < (
i _2,j_ 2)$ lexicographically. So, if $z$ is a cut vector corresponding to a 
given cut $C\subset E _r$ , then $z_{i_1,j_1,i_2,j_2}=1$ if and only if 
the edge $(i_1,j_1,i_2,j_2 )$ is in $C$. The polytope $\CUT(K_r)$ 
is the convex hull of all such cut vectors.

\begin{lemma}\label{lem:plc_vs_cuts}  For every pair of natural numbers $(k,n)
 $ and $r=k\left\lfloor\log{n}\right\rfloor$ it holds that $\CUT(K_r)$ is a 
projection of $ \STAB_{k^2}\left(\PLC\left(k,n\right)\right).$  \end{lemma}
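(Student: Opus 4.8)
The plan is to exhibit an explicit linear map $\pi$ from $\RR^{V(\PLC(k,n))}$ onto $\RR^{E(K_r)}$ that sends the vertex set of $\STAB_{k^2}(\PLC(k,n))$ onto (a superset containing, and in fact exactly) the vertex set of $\CUT(K_r)$; since projections commute with taking convex hulls, this immediately gives that $\CUT(K_r)$ is the image of $\STAB_{k^2}(\PLC(k,n))$ under $\pi$. By Lemma~\ref{lem:plc_stab_size}, the vertices of $\STAB_{k^2}(\PLC(k,n))$ that I need to track are exactly the characteristic vectors of maximum independent sets $I$ of size $k^2$ (and, to get a genuine polytope projection rather than just a projection of finitely many points, I should also check that every independent set of size $<k^2$ maps into $\CUT(K_r)$ as well, which will follow because such sets correspond to ``partial'' cut data that can always be completed). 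Such an $I$ picks, for each $i\in[k]$, a unique set $S_i\subseteq[\lfloor\log n\rfloor]$ via its cut vertex $(i,S_i)$, and for each ordered pair $i\neq j$ a unique pairing vertex $(i,j,T^{ij}_1,T^{ij}_2)$.

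The first substantive step is to read off what the edges of type~\ref{it:defPLCpair} in Definition~\ref{def:PLC} force. If $I$ contains the cut vertex $(i,S_i)$ and the pairing vertex $(i,j,T^{ij}_1,T^{ij}_2)$ with the first coordinate $i=j_1$, then non-adjacency forces $S_i=T^{ij}_1$; symmetrically, if $i$ appears as the second coordinate of the pair $(j,i,\cdot,\cdot)$, then $S_i=T^{ji}_2$. Hence in any maximum independent set the pairing vertex chosen for the ordered pair $(i,j)$ must be exactly $(i,j,S_i,S_j)$ — the pairing vertices serve only to ``glue'' the chosen sets consistently, which is why $\PLC$ is the right gadget. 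So a maximum independent set $I$ is in bijection with a tuple $(S_1,\ldots,S_k)$ of arbitrary subsets of $[\lfloor\log n\rfloor]$.

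Next I encode a tuple $(S_1,\ldots,S_k)$ as a vertex bipartition of $K_r$ and hence as a cut vector. Recall $K_r$ has vertices $v^i_j$, $i\in[k]$, $j\in[\lfloor\log n\rfloor]$. Put $v^i_j$ on one side of the cut iff $j\in S_i$; call the resulting vertex $2$-colouring $c:V(K_r)\to\{0,1\}$ and let $z=z(S_1,\ldots,S_k)$ be its cut vector, so $z_{i_1,j_1,i_2,j_2}=c(v^{i_1}_{j_1})\oplus c(v^{i_2}_{j_2})$. Since \emph{every} $2$-colouring of $V(K_r)$ arises this way (the $S_i$ range over all subsets independently), the map $(S_1,\ldots,S_k)\mapsto z$ is onto the set of all cut vectors of $K_r$, and therefore onto the vertices of $\CUT(K_r)$. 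The core of the proof is to express each coordinate $z_{i_1,j_1,i_2,j_2}$ as a \emph{linear function} of the characteristic vector $\chi^I$. The identity to use is the standard linearization of XOR over $0/1$ values: for $a,b\in\{0,1\}$, $a\oplus b = a+b-2ab$; and $c(v^{i_1}_{j_1})=[\,j_1\in S_{i_1}\,]$, while $[\,j_1\in S_{i_1}\,]=\sum_{S\ni j_1}\chi^I_{(i_1,S)}$ is linear in $\chi^I$. The only nonlinear term is the product $c(v^{i_1}_{j_1})\cdot c(v^{i_2}_{j_2})$, and this is precisely where the pairing vertices are needed: this product equals $[\,j_1\in S_{i_1}\text{ and }j_2\in S_{i_2}\,]$, which for a maximum independent set equals $\sum_{S_1\ni j_1,\;S_2\ni j_2}\chi^I_{(i_1,i_2,S_1,S_2)}$ — linear in $\chi^I$ — because the pairing vertex selected for $(i_1,i_2)$ records exactly $(S_{i_1},S_{i_2})$. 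Substituting, each $z$-coordinate becomes an explicit linear combination of entries of $\chi^I$, which defines the desired projection matrix $\pi$.

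Finally I would verify the two directions: (i) $\pi(\chi^I)$ is a valid cut vector of $K_r$ for every independent set $I$ (for maximum $I$ this is the computation above; for smaller $I$ one checks that the resulting vector still lies in $\CUT(K_r)$, e.g.\ by observing that the linear expressions still evaluate to the cut vector of the partially-defined colouring extended arbitrarily, or simply noting that $\CUT(K_r)=\conv$ of images of maximum independent sets suffices for the projection claim once we also know $\STAB_{k^2}=\conv$ of its vertices); and (ii) every cut vector of $K_r$ is hit, which is the surjectivity observed above. Then $\pi(\STAB_{k^2}(\PLC(k,n)))=\pi(\conv\{\chi^I\})=\conv\{\pi(\chi^I)\}\supseteq\conv\{\text{all cut vectors}\}=\CUT(K_r)$, and the reverse inclusion holds because every $\pi(\chi^I)$ is a cut vector; hence equality. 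I expect the main obstacle to be bookkeeping rather than conceptual: carefully matching the index conventions (the lexicographic orders on vertices and edges of $K_r$, the role of the two subsets in a pairing label, and the symmetric pair of edges in item~\ref{it:defPLCpair}) so that the product term $c(v^{i_1}_{j_1})c(v^{i_2}_{j_2})$ is captured by genuinely linear access to a \emph{single} pairing coordinate, and confirming that non-maximum independent sets do not produce vectors outside $\CUT(K_r)$.
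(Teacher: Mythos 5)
Your proposal is correct and follows essentially the same route as the paper: both define an explicit coordinate-wise linear map from the vertex space of $\PLC(k,n)$ to the edge space of $K_r$, use Lemma~\ref{lem:plc_stab_size} to pin down the structure of the size-$k^2$ independent sets (cut vertices fix the $S_i$, pairing vertices are forced to $(i,j,S_i,S_j)$), and verify surjectivity onto the cut vectors. The only cosmetic difference is that the paper reads each coordinate $z_{i_1,\ell_1,i_2,\ell_2}$ directly as a sum over the ``separating'' pairing (resp.\ cut) coordinates rather than via your XOR linearization $a+b-2ab$ (the two maps agree on the relevant vertices), and your concern about sub-maximum independent sets is moot because $\STAB_{k^2}$ here is the exact-size polytope, whose vertices are precisely the maximum independent sets.
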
 

\begin{proof}
% Let $s$ denote the length
Recall that the independent set vectors of $\STAB_{k^2}(\PLC(k,n))$
are of length $s= k(k-1)\cdot2^{2\left\lfloor\log{n}\right\rfloor}
 +k\cdot2^{ \left\lfloor\log{n} \right\rfloor}$, by Observation~\ref{obs:plc_size}.
We describe an affine map $\pi:\RR
^{s} \to \RR^{r\choose 2}$ such that for every vertex $C$ of $\CUT(K_r)$ 
there exists a vertex $I$ of $\STAB_{k^2}(\PLC(k,n))$ such that $\pi(I)=C$. 
Moreover, for every vertex $I$ of  $\STAB_{k^2}(\PLC(k,n))$ we show that $
\pi(I)$ is a vertex of $\CUT(K_r)$. Since $\pi$ is an affine map, this will 
complete the proof.

To make it easy to follow our arguments,
we relate the coordinates of $\RR^s$ to the vertices of $\PLC(k,n)$,
labeling the coordinates with tuples of the form $(i,j,S,S')$ as follows.
% . To make it easy to refer to this identification we label the coordinates with 
% tuples $(i,j,S_1,S_2)$ defined as follows. 
For a coordinate corresponding to a cut vertex $(i,S)$ we label this 
coordinate with $(i,i,S,S)$. For a coordinate 
corresponding to a pairing vertex $(i,j,S,S')$ we label the coordinate 
with same $(i,j,S,S')$. 
Similarly, we identify the coordinates of $\RR^{r\choose 2}$ 
with the pairs of vertices of~$K_r$:
the coordinate corresponding to an edge between two distinct vertices 
$v^i_{\ell_1}$ and $v^j_{\ell_2}$ is to be 
labeled with the integer tuple $(i,\ell_1,j,\ell_2)$, 
assuming that $v^i_{\ell_1} < v^j_{\ell_2}$ lexicographically
(that is, $i\leqslant j$ and if $i=j$ then $\ell_1<\ell_2$). 
Also note that $1 \leqslant i,j \leqslant k $  and $1\leqslant  \ell_1,\ell_2 
\leqslant \lfloor\log{n}\rfloor$. 

Given a vector $y\in\RR^s$ we define $\pi(y):=z\in \RR^{r\choose 2}$ 
where $$z_{i_1,\ell_1,i_2,\ell_2}= \left\{\begin{array}{ll}
\mathlarger{\mathlarger{\sum}}\limits_{\substack{\\S\subseteq \left[ \left\lfloor\log
{n} \right\rfloor\right]\\\ell_1\notin S\\\ell_2 \in S}}y_{i_1,i_1,S,S} +
\mathlarger{\mathlarger{\sum}}\limits_{\substack{ S\subseteq \left[ \left\lfloor\log
{n} \right\rfloor\right]\\\ell_1\in S\\\ell_2 \notin S}}y_{i_1,i_1,S,S} & 
		\text{ if } i_1=i_2 \,,
 \\
\mathlarger{\mathlarger{\sum}}\limits_{\substack{ S'\subseteq \left[ 
\left\lfloor\log{n} \right\rfloor\right]\\S''\subseteq \left[ \left\lfloor\log
{n} \right\rfloor\right]\\\ell_1\notin S'\\\ell_2 \in S''}}
		y_{i_1,i_2,S',S''} +
\mathlarger{\mathlarger{\sum}}\limits_{\substack{ S'\subseteq \left[ 
\left\lfloor\log{n} \right\rfloor\right]\\S''\subseteq \left[ \left\lfloor\log
{n} \right\rfloor\right]\\\ell_1\in S'\\\ell_2 \notin S''}}
		y_{i_1,i_2,S',S''} & \text{ if } i_1\neq i_2 \,.
\end{array}\right. 
$$
 
Let $y\in\RR^s$ be a vertex of $\STAB_{k^2}(\PLC(k,n)).$ That is, $y$ is the 
characteristic vector of an independent set $I\in\II$. Since $I$ is of size $k
^2$, for every \mbox{$1\leqslant i\not=j\leqslant k$} the following hold by
Lemma~\ref{lem:plc_stab_size}:
\begin{itemize}
\item there is exactly one
$S_i\subseteq \left[\left\lfloor{\log{n}}\right\rfloor\right]$
such that $y_{i,i,S_i,S_i}=1$, and
\item there is exactly one pair
$S'_{ij},S''_{ij}\subseteq \left[\left\lfloor{\log{n}}\right\rfloor\right]$
such that $y_{i,j,S'_{ij},S''_{ij}}=1$.
\end{itemize}
Furthermore, by Definition~\ref{def:PLC}.\,\ref{it:defPLCpair},
for any pairing vertex $(i_1,i_2,S',S'')$ picked in $I$, that is
$y_{i_1,i_2,S',S''}=1$, it holds;
% the sets $S,S'$ are unique for given~$i_1,i_2$; 
if $i=i_1$ then $S_i=S'$, and if $i=i_2$ then $S_i=S''$. 
Consider the subsets $S(I), \overline{S}(I)$ of vertices of $K_r$ defined as follows: 
$$
S(I) := \left\{v^i_\ell |\> i \in [k] \wedge \ell \in S_i \right\}
\mbox{ and }\>
\overline{S}(I) := \left\{v^i_\ell |\> i \in [k] \wedge \ell \not\in S_i \right\}
$$
% {\begin{center}$\begin{array}{ll}S(I)=&\left\{v^i_j |\> 1\leqslant i\leqslant 
% k; j \in S_i \right\}\\ \overline{S}(I)=&\left\{v^i_j |\> 1\leqslant i\leqslant 
% k; j \notin S _i\right\}\end{array}$\end{center}}
It is routine to check that $\pi(y)$ is exactly the characteristic vector 
of the cut defined by $S(I)$, $\overline{S}(I)$ because $z_{i_1,\ell_1,i_2,\ell_2}=1$ 
if and only if $v^{i_1}_{\ell_1}$ and $v^{i_2}_{\ell_2}$ do not lie both in $S(I
)$ or both in $\overline{S}(I).$

On the other hand, any cut in $K_r$ of characteristic vector~$z$ 
creates a bipartition $(S,\overline{S})$ of the vertices of $K_r$.
The bipartition $(S,\overline{S})$ consequently
induces bipartitions $(S_i,\overline{S}_i)$, $i=1,\dots,k$, 
within each of the $k$ groups of the vertices of $K_r$;
namely $S_i:=\big\{j |\> 1\leqslant j \leqslant \lfloor\log{n}\rfloor
	\wedge$ $ v^i_j\in S\big\}$.
Then $\big\{(i,S_i) |\> {1\leqslant i\leqslant k}\big\}\cup
	\big\{(i,j,S_i,S_j) |\> {1\leqslant i \not= j\leqslant k}\big\}$ 
is an independent set of $\PLC(k,n)$ whose size is $k^2$ 
and whose characteristic vector projects to $z$ under~$\pi.$

Hence $\pi$ defines a projection from $\PLC(k,n)$ to $\CUT(K_r).$
\end{proof}

\begin{corollary}\label{cor:stab_plc_lb} There exists a constant $c'>0$ such 
that for $k,n\in\NN$,
 $$\xc\left(\STAB\nolimits_{k^2}(\PLC(k,n))\right) \geqslant  n^{c{
	'}k}.$$ 
\end{corollary}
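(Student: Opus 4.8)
The plan is to combine the two main ingredients already assembled: the projection established in Lemma~\ref{lem:plc_vs_cuts}, which says that $\CUT(K_r)$ is a linear projection of $\STAB_{k^2}(\PLC(k,n))$ for $r=k\lfloor\log n\rfloor$, together with the exponential lower bound of Theorem~\ref{thm:xc_cut}, which gives $\xc(\CUT(K_r))=2^{\Omega(r)}$. Since extension complexity cannot increase under projection (an EF of $\STAB_{k^2}(\PLC(k,n))$ is also an EF of any projection of it), we immediately get $\xc(\STAB_{k^2}(\PLC(k,n)))\geqslant \xc(\CUT(K_r))=2^{\Omega(r)}$.

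The remaining work is purely arithmetic: rewrite $2^{\Omega(r)}$ with $r=k\lfloor\log n\rfloor$ in the claimed form $n^{c'k}$. First I would note $\lfloor\log n\rfloor\geqslant \tfrac12\log n$ for $n\geqslant 2$, so $r\geqslant \tfrac12 k\log n$, hence $2^{\Omega(r)} = 2^{\Omega(k\log n)} = n^{\Omega(k)}$. Unwinding the hidden constant from Theorem~\ref{thm:xc_cut} (say $\xc(\CUT(K_r))\geqslant 2^{ar}$ for some fixed $a>0$ and all sufficiently large $r$), this yields $\xc(\STAB_{k^2}(\PLC(k,n)))\geqslant 2^{ar}\geqslant 2^{(a/2)k\log n}=n^{(a/2)k}$, so $c'=a/2$ works. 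One should also dispose of the small cases where $r$ is below the threshold at which the bound of Theorem~\ref{thm:xc_cut} kicks in; since there are only finitely many such $(k,n)$ with $k\lfloor\log n\rfloor$ bounded (in fact one may just take $n$ large enough, or shrink $c'$ to absorb them), this is a triviality.

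I do not anticipate a genuine obstacle here — the corollary is essentially a bookkeeping consequence of the preceding lemma and the cited hardness of $\CUT(K_n)$. The only mildly delicate point is making sure the logarithms and floors line up so that the exponent is genuinely linear in $k$ (and not, say, $k/\log n$), but the grouping of $K_r$'s vertices into $k$ blocks of size $\lfloor\log n\rfloor$ was designed precisely so that $r=k\lfloor\log n\rfloor=\Theta(k\log n)$, which is exactly what is needed. So the proof is: apply projection-monotonicity of $\xc$ via Lemma~\ref{lem:plc_vs_cuts}, invoke Theorem~\ref{thm:xc_cut}, and substitute $r=k\lfloor\log n\rfloor$.
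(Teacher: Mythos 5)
Your proposal is correct and matches the paper's own proof: both deduce $\xc(\STAB_{k^2}(\PLC(k,n)))\geqslant\xc(\CUT(K_r))$ from Lemma~\ref{lem:plc_vs_cuts} via monotonicity of extension complexity under projection, then apply Theorem~\ref{thm:xc_cut} and substitute $r=k\lfloor\log n\rfloor$. Your treatment of the floor and of the hidden constant in the $\Omega(\cdot)$ is just a more explicit spelling-out of the arithmetic the paper leaves implicit.
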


\begin{proof}
By Lemma \ref{lem:plc_vs_cuts}, $\STAB_{k^2}(\PLC(k,n))$ is an 
extended formulation of $\CUT(K_r)$ with $r=k\lfloor\log{n}\rfloor$. So any 
extended formulation of $\STAB_{k^2}(\PLC(k,n))$ is also an extended 
formulation of $\CUT(K_r).$ By Theorem \ref{thm:xc_cut}, $\xc\big(\CUT(K_r)\big) \geqslant 2^{\Omega(r)}$. Therefore, $\xc\big(\STAB_{k^2}(\PLC(k,n))\big)
 \geqslant \xc\big(\CUT(K_r)\big) \geqslant 2^{\Omega(r)} \geqslant  n^{c{'}k}$ for 
some constant $c'>0.$ 
\end{proof}

We can now easily finish with the main result of this section.
\begin{theorem}\label{thm:stab_xc_lb}
 There is no function $f:\NN\to\RR$ such that $\xc(\STAB_k(G))\leqslant f(k)
\cdot n^{\Oh(1)}$ for all natural numbers $k$ and all graphs $G$ on $n$ 
vertices.
\end{theorem}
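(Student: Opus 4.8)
The plan is to derive a contradiction by feeding the purpose-built graphs $\PLC(k,n)$ of Definition~\ref{def:PLC} into the hypothetical bound. Suppose $f\colon\NN\to\RR$ is such a function, and let $d\in\NN$ be the absolute constant hidden in the exponent $\Oh(1)$, so that $\xc(\STAB_m(H))\leqslant f(m)\cdot N^{d}$ holds for every $m\in\NN$ and every graph $H$ on $N$ vertices. By Lemma~\ref{lem:stabequalnon} it is immaterial whether $\STAB_m$ denotes the ``size exactly $m$'' or the ``size at most $m$'' polytope, so we may work with $\STAB_{k^2}$ exactly as in Corollary~\ref{cor:stab_plc_lb}.

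First I would instantiate the assumed bound with $H:=\PLC(k,n)$ and parameter value $m:=k^{2}$. By Observation~\ref{obs:plc_size}, $H$ has $N\leqslant(kn)^{2}$ vertices, so
$$\xc\bigl(\STAB\nolimits_{k^{2}}(\PLC(k,n))\bigr)\;\leqslant\; f(k^{2})\cdot(kn)^{2d}\;=\;\bigl(f(k^{2})\,k^{2d}\bigr)\cdot n^{2d}.$$
On the other side, Corollary~\ref{cor:stab_plc_lb} supplies a constant $c'>0$ with $\xc(\STAB_{k^{2}}(\PLC(k,n)))\geqslant n^{c'k}$ for all $k,n\in\NN$. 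Chaining the two estimates gives $n^{c'k}\leqslant\bigl(f(k^{2})\,k^{2d}\bigr)\cdot n^{2d}$ for every $n\in\NN$.

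The final step is to pick a single value of $k$ with $c'k>2d$; this is legitimate precisely because $c'$ and $d$ are constants not depending on $k$. For this fixed $k$ the coefficient $C_k:=f(k^{2})\,k^{2d}$ is a constant, and the inequality $n^{c'k}\leqslant C_k\cdot n^{2d}$, i.e.\ $n^{\,c'k-2d}\leqslant C_k$, would have to hold for all $n\in\NN$; since $c'k-2d>0$ this fails for all sufficiently large $n$, which is the desired contradiction. I do not anticipate any real difficulty here: the substantive work already sits in Lemma~\ref{lem:plc_vs_cuts} and Corollary~\ref{cor:stab_plc_lb} together with the near-quadratic vertex bound of Observation~\ref{obs:plc_size}. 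The only point worth watching is that the polynomial degree $d$ in the FPT extension bound is a fixed constant independent of $k$ (this is what the notation $n^{\Oh(1)}$ means), since it is exactly this that allows the exponent $c'k$ on the lower-bound side to eventually overtake the upper bound once $k$ is chosen large enough.
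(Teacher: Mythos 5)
Your proposal is correct and is essentially identical to the paper's own proof: both instantiate the hypothetical bound on $\PLC(k,n)$ with parameter $k^2$, combine Observation~\ref{obs:plc_size} with Corollary~\ref{cor:stab_plc_lb}, and then fix $k$ with $c'k$ exceeding twice the polynomial degree before letting $n$ grow. No gaps.
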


\begin{proof} 
Suppose, on the contrary, that such a function $f$ does exist. That is, there is 
a constant $c$ such that for every pair of natural numbers $(\ell,m)$ and for 
all $m$-vertex graphs $G$ it holds that 
$\xc(\STAB_\ell(G))\leqslant f(\ell)\cdot m^{c}$. 
 
Given a pair $(k,n)$ of natural numbers consider the graph $\PLC(k,n)$.  
By Corollary~\ref{cor:stab_plc_lb}, we have that $\xc\left(\STAB_{k^2}(\PLC(k,n))\right) 
\geqslant  n^{c{'}k}$ for some constant $c'>0$. 
On the other hand, we have $\ell=k^2$ and $m\leqslant(kn)^2$ by
Observation~\ref{obs:plc_size}, and so we derive from our assumption that
$\xc\left(\STAB_{k^2}(G)\right)\leqslant f(k^2)\cdot (kn)^{2c}$. 
However, implied $n^{c{'}k}\leqslant f(k^2)\cdot (kn)^{2c}$ clearly cannot
hold true for a sufficiently large but fixed parameter $k$ and arbitrary~$n$.
More precisely, we can choose $k>2c/c'$ and $n$ such that
$\log{n}>(\log{f(k^2)}+2c\log{k})/(c'k-2c)>0$ and obtain
$$
n^{c{'}k}=n^{2c}\cdot n^{c'k-2c} > 
	n^{2c}\cdot 2^{\log{f(k^2)}+2c\log{k}}=f(k^2)\cdot n^{2c}k^{2c}
,$$
a~contradiction. Hence no such function $f$ exists. 
\end{proof}

We remark that the function $f$ in the previous theorem need not even be computable.

\section{Upper Bound: Bounded Expansion Classes}
\label{sec:upperbound}

While Theorem~\ref{thm:stab_xc_lb} asserts that FPT extensions are not
possible for the $k$-independent set polytopes of all graphs,
there is still a good chance to prove a positive result for restricted
classes of graphs.
An example of such restriction is, by a simple modification of Theorem~\ref{thm:tw_low_xc},
presented by graph classes of bounded treewidth;
although, this is somehow too restrictive.
We show that in the case of $k$ being a fixed parameter, one can go much
further.

% Lastly, we are going to prove a kind of a complementary result to
% Theorem~\ref{thm:stab_xc_lb}; showing that on rather rich class(es) of graphs
% for which the $k$-independent set problem admits FPT algorithms,
% we also have FPP extension complexity.
% This new result is not easily comparable to Theorem~\ref{thm:tw_low_xc}
% due to two differences -- first,
% we consider a much larger class of graphs than that of bounded treewidth,
% and second, we parameterize on the solution size.

The underlying idea of our approach can be informally explained as follows.
Imagine we can ``guess'', in advance, a (short) list of well-structured subgraphs of
our graph such that every possible independent set is fully contained in at
least one of them.
Then we can separately construct an independent set polytope for each one of
the subgraphs, and make their union at the end (Theorem~\ref{thm:union_balas}).
This ambitious plan indeed turns out to be viable for graph classes of
bounded expansion (Definition~\ref{def:bdexpansion}), and the key to the
success is a combination of a powerful structural characterization
of bounded expansion (Theorem~\ref{thm:low_td_coloring})
with the size bound $k$ on the independent sets.

In order to state the desired structural characterization, we need the
notion of treedepth.
In this context, a \emph{rooted forest} is a disjoint union of rooted trees. 
% For a vertex~$x$ in a tree~$T$  of a rooted forest, the \emph{height} (or {\em depth})
% of~$x$ in the forest is the number of vertices in the path from the root of~$T$ to~$x$. 
The {height of a rooted forest} is the maximum distance from one of the
forest's roots to a vertex in the same tree. 
The \emph{closure} $\text{clos}(\mathcal{F})$ of a rooted forest~$\mathcal{F}$ 
is the graph with the vertex set 
$\bigcup_{T \in \mathcal{F}} V(T)$ and the edge set $\{xy : \text{$x$ is
  an ancestor of $y$ in a tree of $\cal F$}\}$. 
% A \emph{treedepth decomposition} of a graph $G$ is a rooted forest 
% $\mathcal{F}$ such that $G \subseteq \text{clos}(\mathcal{F})$.
% \begin{definition}[Treedepth]
The \emph{treedepth} $\text{td}(G)$ of a graph~$G$ is the minimum height
plus one of a rooted forest~$\mathcal{F}$ such that $G \subseteq
\text{clos}(\mathcal{F})$.
% \end{definition}

% For simplicity, we skip here the rather complicated definition of
% treewidth (which is never directly used in our arguments, anyway).
The following fact, which will allow us to connect the expansion concept
of this section with Theorem~\ref{thm:tw_low_xc},
is easy to establish directly from the definitions:
\begin{observation}\label{obs:tdtw}
For any $G$, the treewidth of $G$ is at most the treedepth of $G$ minus one.
% If $\mathcal{G}$ is a graph class of bounded treedepth, then $\mathcal{G}$ is of bounded treewidth.
\end{observation}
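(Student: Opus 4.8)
The plan is to reduce the statement to a single rooted tree and then exhibit an explicit tree-decomposition of its closure. First I would use two elementary facts about treewidth: it is monotone under taking subgraphs, and the treewidth of a disjoint union of graphs equals the maximum of the treewidths of the components. Writing $\text{td}(G)=h+1$, where $h$ is the height of a rooted forest $\mathcal{F}$ with $G\subseteq\text{clos}(\mathcal{F})$, these facts reduce the claim to showing that for each rooted tree $T\in\mathcal{F}$ (which has height at most $h$) the graph $\text{clos}(T)$ has treewidth at most $h$.

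Next I would construct a tree-decomposition $(T,\mathcal{W})$ of $\text{clos}(T)$ by keeping $T$ itself as the decomposition tree and, for each node $x\in V(T)$, letting the bag $W_x$ be the vertex set of the unique path in $T$ from the root to $x$; equivalently, $W_x$ consists of $x$ together with all of its ancestors. Verifying the three axioms of a tree-decomposition is routine: every vertex $v$ lies in $W_v$, so the bags cover $V(\text{clos}(T))$; for any edge $xy$ of $\text{clos}(T)$ one endpoint, say $x$, is an ancestor of the other, so both $x$ and $y$ lie in $W_y$; and for a fixed vertex $u$ the set $\{x : u\in W_x\}$ is exactly the set of descendants of $u$, which induces the subtree of $T$ rooted at $u$ and is therefore connected.

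Finally I would bound the width: the bag $W_x$ has size equal to the depth of $x$ plus one, so $\max_{x\in V(T)}|W_x|-1$ equals the height of $T$, which is at most $h$. Chaining the reductions then gives $tw(G)\le tw(\text{clos}(\mathcal{F}))\le h=\text{td}(G)-1$, as desired.

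There is no genuine obstacle in this argument; the construction of the decomposition is forced and the verification is immediate. The only point requiring mild care is the off-by-one bookkeeping, namely keeping straight the ``$-1$'' in the definition of the width of a tree-decomposition against the ``plus one'' in the definition of treedepth and the relation between the height of the forest and the depth of its deepest vertex.
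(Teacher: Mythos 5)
Your proof is correct: the paper omits the argument entirely (stating the observation is ``easy to establish directly from the definitions''), and your construction---bags given by root-to-node paths in the rooted forest witnessing the treedepth---is exactly the standard argument the authors intend. The verification of the three tree-decomposition axioms and the width/height bookkeeping are all accurate.
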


%We will need the following fact about treedepth
%decompositions.
%\begin{proposition}[\hspace{-.02em}\cite{NOdM12}]\label{pro:compute-td}
%  Given a graph $G$ with $n$ nodes and a constant $w$, it is possible to
%  decide whether $G$ has treedepth at most $w$, and if so, to compute an
%  optimal treedepth decomposition of $G$ in time $O(n)$.
%\end{proposition}

The amazing connection between graph classes of bounded expansion and 
treedepth is captured by the notion of low treedepth coloring: 
For an integer $d \ge 1$, an assignment of colors to the vertices
of a graph $G$ is a {\em low treedepth coloring of order $d$} if, for every
$s = 1, 2, \dots , d$, the union of any $s$ color classes induces a subgraph 
of $G$ of treedepth at~most~$s$. 

In particular, every low treedepth coloring of $G$ is a
proper coloring of $G$ (but not the other way round),
and the union of any two color classes induces a forest of stars.
The following result is crucial:

\begin{theorem}[Ne\v{s}et\v{r}il and Ossona de Mendez~\cite{NOdM06,NOdM08}]
\label{thm:low_td_coloring}
If $\mathcal{G}$ is a class of graphs of bounded expansion,
then there is a function $N_{\mathcal{G}} : \mathbb{N} \rightarrow \mathbb{N}$ 
(depending on the expansion function of~$\mathcal{G}$)
such that for any graph $G \in \mathcal{G}$ and $k$, 
there exists a low treedepth coloring of order $k$ of $G$ 
using at most $N_{\mathcal{G}} (k)$ colors.
This coloring can be found in linear time for a fixed~$k$.
\end{theorem}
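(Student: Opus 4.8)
The plan is to reconstruct the argument of Ne\v{s}et\v{r}il and Ossona de Mendez~\cite{NOdM12}: obtain the coloring from a \emph{transitive fraternal augmentation} of $G$, and then show that a proper coloring of a sufficiently deep augmentation is automatically a low treedepth coloring of $G$ of the required order. Throughout I fix the class $\mathcal{G}$ with expansion function $f$, a graph $G \in \mathcal{G}$, and the target order $p := k$. It is convenient to use the elimination-forest view of treedepth: $\text{td}(H) \le s$ iff $H$ has an \emph{elimination forest} of height $\le s$, i.e.\ one may delete a single vertex from each connected component of $H$, dropping the bound from $s$ to $s-1$, until nothing remains within $s$ rounds.

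\textbf{Step 1 (building the augmentation --- the main obstacle).} Call an orientation of a graph \emph{fraternal} if any two in-neighbours of a common vertex are adjacent, and \emph{transitive} if $x \to y \to z$ forces $x$ and $z$ adjacent. Starting from $G$ with a fixed orientation of in-degree at most $2\nabla_0(\mathcal{G})$ (which exists since $G$ is $2\nabla_0(\mathcal{G})$-degenerate), build a chain $G = G_0 \subseteq G_1 \subseteq \cdots \subseteq G_m$ of supergraphs on $V(G)$: to go from $G_i$ to $G_{i+1}$ one adds a transitive arc for every directed path of length two and a suitably oriented fraternal edge for every pair of arcs sharing their head, then re-orients the result to restore a small in-degree. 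The crucial claim, and the genuine technical heart of the whole theorem, is a \emph{density-preservation} statement: there are a function $\Delta_p = \Delta_p(f)$ and a bound $m = m(p)$ (of order $2^{p}$) such that every $G_i$ with $i \le m(p)$ admits an orientation of in-degree at most $\Delta_p$. One proves this by bounding $|E(G_{i+1})|/|V(G_{i+1})| \le c\cdot f(2^{i})$ for a universal constant $c$: each new edge of $G_{i+1}$ shortcuts a short path of $G$ routed through vertices of small in-degree, and contracting a judiciously chosen family of such paths realizes $G_{i+1}$ --- after removing a bounded-degree part --- as a shallow minor of $G$ at depth about $2^{i}$, to which Definition~\ref{def:bdexpansion} applies; since the same reasoning bounds the density of every subgraph of $G_{i+1}$, that graph is $\Oh(f(2^{i}))$-degenerate and hence has a bounded-in-degree orientation. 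I expect this step --- essentially the transitive fraternal augmentation theorem of~\cite{NOdM12} --- to be by far the hardest; the rest is bookkeeping.

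\textbf{Step 2 (few colors) and Step 3 (it is a low treedepth coloring).} After $m(p)$ augmentation rounds, $G_{m(p)}$ is $2\Delta_p$-degenerate, so a greedy coloring along a degeneracy order yields a proper coloring $c$ of $G_{m(p)}$ with $N_{\mathcal{G}}(p) := 2\Delta_p + 1$ colors, depending only on $f$ and $p$; this is the coloring we output, restricted to $V(G)$. To see it is a low treedepth coloring of order $p$, fix $s \le p$, color classes $V_1,\dots,V_s$, put $U = V_1 \cup \cdots \cup V_s$ and $H = G[U]$, and show $\text{td}(H)\le s$. Restricting the orientation of $G_{m(p)}$ to $U$ still gives a transitive and fraternal oriented graph, and --- because $m(p)$ was chosen large enough --- repeated use of transitivity contracts every path of $H$ of length $<2^{s}$ to a single arc whose intermediate shortcut vertices all lie on that path, hence in $U$. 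Now induct on $s$ to produce an elimination forest of height $\le s$: in a connected component of $H$ pick a vertex $r$ that is a sink of the (acyclic) restricted orientation; by fraternity its in-neighbours form a clique, and combining this with transitivity and the above shortcutting, deleting $r$ leaves a graph that again carries an augmentation deep enough on at most $s-1$ of the color classes, so by induction it has an elimination forest of height $\le s-1$, which we hang below $r$. The proper-coloring hypothesis is exactly what rules out the degenerate case in which the recursion would fail to terminate within $s$ rounds: a monochromatic pair produced by the shortcutting would be an edge of $G_{m(p)}$ joining two equally colored vertices, contradicting properness of $c$. Thus $\text{td}(H)\le s$; via Observation~\ref{obs:tdtw} this also delivers the treewidth bound used in the next section.

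\textbf{Step 4 (linear time for fixed $k$) and an alternative.} A graph in $\mathcal{G}$ has $\Oh(n)$ edges and an in-degree-$\Oh(1)$ orientation computable in $\Oh(n)$ time by repeatedly deleting a minimum-degree vertex; for fixed $p$ each augmentation round only inspects, per vertex, the constantly many in-neighbours and their in-neighbours, so runs in $\Oh(n)$ time, and there are $m(p) = \Oh(1)$ rounds, after which greedy coloring is again $\Oh(n)$. Altogether the coloring is produced in time $\Oh(f(p)\cdot n)$, which is linear for fixed $k=p$. I would also note that Steps~1--3 can be replaced by a route through \emph{generalized coloring numbers}: bounded expansion is equivalent to boundedness of the weak coloring numbers $\mathrm{wcol}_r$ for every $r$, and an ordering witnessing a bound on $\mathrm{wcol}_{r(p)}(G)$ for a suitable $r(p)$ yields, by Zhu's inequality, a low treedepth coloring of order $p$ with a bounded number of colors --- which merely shifts the difficulty to the (also nontrivial) boundedness of weak coloring numbers on $\mathcal{G}$.
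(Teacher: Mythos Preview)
The paper does not prove Theorem~\ref{thm:low_td_coloring}; it is quoted as a known result of Ne\v{s}et\v{r}il and Ossona de Mendez~\cite{NOdM06,NOdM08} and used as a black box in the subsequent arguments. So there is no ``paper's own proof'' to compare against---your proposal is really a reconstruction of the proof from the original sources.

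That reconstruction is, in outline, the standard transitive-fraternal-augmentation route of~\cite{NOdM08,NOdM12}, and the overall architecture (Steps~1,~2,~4, and the alternative via weak coloring numbers and Zhu's theorem) is correct. The one place where your sketch is noticeably loose is Step~3: the inductive elimination-forest argument you describe (pick a sink $r$, its in-neighbours form a clique by fraternity, delete and recurse) does not by itself guarantee termination in $s$ rounds merely from properness of~$c$; the actual argument in~\cite{NOdM12} shows that a proper coloring of a sufficiently deep augmentation is a \emph{centered coloring} of order~$p$ (every connected subgraph using at most $p$ colors has a color appearing exactly once), and then invokes the equivalence between $p$-centered colorings and low treedepth colorings of order~$p$. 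Your shortcutting-plus-sink argument is gesturing at this but conflates two ingredients; if you want to make it self-contained, prove the centered-coloring property first and derive the elimination forest from that.
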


We are now ready to state and prove the main theorem of this section.

\begin{theorem}\label{thm:bdexptoxc}
 Let $\mathcal{G}$ be any graph class of bounded expansion. Then 
there exists a computable function $f:\NN\to\NN$, depending on the expansion function
of~$\mathcal{G}$, such that
$$\xc\!\big(\STAB\nolimits_k(G)\big)\leqslant f(k)\cdot n $$ 
holds for every integer $n$ and every $n$-vertex graph~$G\in\mathcal{G}$.
Moreover, an explicit extension of $\STAB\nolimits_k(G)$ of size at most
$f(k)\cdot n$ can be found in linear time for fixed $k$ and $\mathcal{G}$.
% \say{PH: can you get a linear-time algorithm from \cite{BB:14}?}
\end{theorem}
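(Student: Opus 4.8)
The plan is to combine the low treedepth coloring (Theorem~\ref{thm:low_td_coloring}), the treedepth/treewidth bound on extension complexity (Theorem~\ref{thm:tw_low_xc} via Observation~\ref{obs:tdtw}), and Balas' union bound (Theorem~\ref{thm:union_balas}), exploiting the crucial point that an independent set of size at most $k$ meets at most $k$ color classes. First I would fix $G\in\mathcal{G}$ on $n$ vertices, and apply Theorem~\ref{thm:low_td_coloring} with order~$k$ to obtain a coloring $c:V(G)\to[N]$ with $N=N_{\mathcal{G}}(k)$ colors such that the union of any $s\leqslant k$ color classes induces a subgraph of treedepth at most~$s$. This coloring is computed in linear time for fixed~$k$.

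Next I would enumerate the ``relevant'' subgraphs: for every subset $A\subseteq[N]$ with $|A|=k$, let $G_A=G[c^{-1}(A)]$ be the subgraph induced by the vertices whose color lies in~$A$. There are $\binom{N}{k}\leqslant N^k=N_{\mathcal{G}}(k)^k$ such subsets, a number depending only on~$k$ and~$\mathcal{G}$. The key combinatorial observation is that every independent set $S$ of~$G$ with $|S|\leqslant k$ uses vertices of at most~$k$ distinct colors, hence $S$ is contained in $c^{-1}(A)$ for some $A$ with $|A|=k$ (if $S$ uses fewer than~$k$ colors, pad $A$ arbitrarily); consequently $S$ is an independent set of~$G_A$ and $\chi^S\in\STAB_k(G_A)$ (viewing $\STAB_k(G_A)$ inside $\RR^n$ by setting coordinates outside $c^{-1}(A)$ to zero). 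Therefore $\STAB_k(G)=\conv\big(\bigcup_{A}\STAB_k(G_A)\big)$, where the union ranges over all $k$-subsets $A\subseteq[N]$.

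Now I would bound $\xc(\STAB_k(G_A))$ for each fixed~$A$. Since the union of the $k$ color classes indexed by~$A$ induces a graph of treedepth at most~$k$, Observation~\ref{obs:tdtw} gives $tw(G_A)\leqslant k-1$, and Theorem~\ref{thm:tw_low_xc} yields $\xc(\STAB(G_A))\leqslant\Oh(2^{tw(G_A)}\cdot|V(G_A)|)\leqslant\Oh(2^{k}\cdot n)$; restricting to independent sets of size at most~$k$ is taking a suitable face / intersection, so $\xc(\STAB_k(G_A))\leqslant\Oh(2^{k}n)$ as well (alternatively invoke Lemma~\ref{lem:stabequalnon}). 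Applying Theorem~\ref{thm:union_balas} over the $\binom{N}{k}$ pieces gives
$$
\xc\big(\STAB\nolimits_k(G)\big)\;\leqslant\;\binom{N}{k}+\binom{N}{k}\cdot\Oh(2^{k}n)\;\leqslant\;f(k)\cdot n
$$
for $f(k):=\Oh\big(N_{\mathcal{G}}(k)^k\cdot 2^{k}\big)$, which is computable and depends only on~$k$ and the expansion function of~$\mathcal{G}$. For the ``moreover'' part, each $G_A$ together with a bounded-height treedepth forest (hence a width-$(k-1)$ tree-decomposition) is found in linear time for fixed~$k$, the explicit extension of Buchanan and Butenko is built from it in linear time, and Balas' construction glues the $O(f(k))$ pieces in linear time, so the whole extension is produced in time $f(k)\cdot n$.

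The main obstacle I anticipate is the bookkeeping needed to make ``$\STAB_k(G)=\conv(\bigcup_A\STAB_k(G_A))$'' literally correct in a common ambient space~$\RR^n$: one must check that embedding $\STAB_k(G_A)$ into $\RR^n$ by zero-padding the missing coordinates is a valid polytope of the same extension complexity, and that the vertices of the convex hull of the union are exactly the characteristic vectors of independent sets of size at most~$k$ (no spurious vertices appear, because every such $\chi^S$ lies in some piece and, conversely, every vertex of every piece is such a $\chi^S$). A secondary technical point is justifying that passing from $\STAB(G_A)$ to $\STAB_k(G_A)$ — equivalently from $\STAB^{\leqslant}_k$ to $\STAB_k$ — costs nothing asymptotically, which is exactly Lemma~\ref{lem:stabequalnon}; everything else is routine.
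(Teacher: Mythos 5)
Your proposal is correct and follows essentially the same route as the paper's proof: a low treedepth coloring of order $k$, the subgraphs $G_J$ induced by $k$-subsets of color classes, the observation that any $\leqslant k$-element set lies in some $G_J$, the treewidth bound via Observation~\ref{obs:tdtw} and Theorem~\ref{thm:tw_low_xc}, and Balas' union bound. The technical points you flag (zero-padding into $\RR^n$ and the $\STAB_k$ versus $\STAB_k^{\leqslant}$ distinction via Lemma~\ref{lem:stabequalnon}) are handled the same way, if anything more explicitly than in the paper.
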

\begin{proof}
Since $\mathcal{G}$ is a graph class of bounded expansion, by
Theorem~\ref{thm:low_td_coloring} we can for any $G \in \mathcal{G}$ and
given $k$ find an assignment $c:V(G)\to[N_{\mathcal{G}}(k)]$
such that $c$ is a low treedepth coloring of order~$k$.
Let $\mathcal{J}_k:={[N_{\mathcal{G}}(k)] \choose k}$ denote the set
of $k$-element subsets of~$[N_{\mathcal{G}}(k)]$,
and let a subgraph $G_J\subseteq G$ where $J\in \mathcal{J}_k$, be defined 
as the subgraph of $G$ induced on 
$\bigcup_{j\in\mathcal{J}_k} c^{-1}(j)$~ -- the color classes indexed by~$J$.

Note the following two immediate facts:
\begin{itemize}
\item[a)] by the definition, each $G_J$, $J\in \mathcal{J}_k$, 
	is of treedepth at most~$|J|=k$;
\item[b)] for every set $X\subseteq V(G)$ (independent or not) of size~$|X|\leq k$,
	there is $J\in \mathcal{J}_k$ such that $X\subseteq V(G_J)$.
\end{itemize}
Consequently,
$$\STAB\nolimits_k(G) = \conv\left(
	\bigcup\nolimits_{J\in \mathcal{J}_k} \STAB\nolimits_k(G_J)
	\right)
$$
and it is sufficient to bound the extension complexity of each $\STAB\nolimits_k(G_J)$.

By (a) and Observation~\ref{obs:tdtw}, $tw(G_J)\leq k-1$ and
Theorem~\ref{thm:tw_low_xc} applies here:
$\xc\!\big(\STAB\nolimits_k(G_J)\big)\leqslant \Oh(2^k\cdot|G_J|)
	\leqslant \Oh(2^k\cdot n) \leqslant c'2^k\cdot n$ for a suitable
constant~$c'$.
Then, by Theorem~\ref{thm:union_balas}, we have
\begin{align*}\qquad
\xc\!\big(\STAB\nolimits_k(G)\big) &\leqslant
	|\mathcal{J}_k|+\sum_{J\in\mathcal{J}_k}
		\xc\!\big(\STAB\nolimits_k(G_J)\big)
\\ &\leqslant |\mathcal{J}_k|\cdot(1+c'2^k\cdot n)
	\leqslant {N_{\mathcal{G}}(k) \choose k}(1+c'2^k)\cdot n
	\leqslant f(k)\cdot n
\,.\end{align*}

Note that this extended formulation can be constructed in linear time for
fixed $k$ since the low treedepth coloring
in Theorem~\ref{thm:low_td_coloring} can be found in linear time, the
extended formulation in Theorem~\ref{thm:tw_low_xc} is explicit
using a tree-decomposition trivially derived from the definition of
tree-depth, and the
extended formulation of union of polytopes can be constructed in linear time
from the extensions of the component polytopes~\cite{balas21}.
%\vspace*{-2\baselineskip}\par
% $N_{\mathcal{G}}(k)$-coloring of $G$ such that union of any $k$ color
% classes induces a subgraph of $G$ of treedepth at most $k$.  We generate for
% each $k$-tuple of colors the corresponding graph of treedepth at most $k$,
% and for each of these graphs we generate its independent set polytope.  By
% Theorem~\ref{thm:tw_low_xc} size of each of these polytopes is bounded by
% $O(2^k n)$.  Since vertices of each independent set of size at most $k$ have
% at most $k$ different colors, each such independent set occurs in at least
% one of generated graphs and therefore its polytope.  To obtain the
% independent set polytope for the whole graph we apply Balas theorem.  Since
% our polytope is union of ${N_{\mathcal{G}}(k) \choose k}$ polytopes of size
% of order $O(2^k n)$, the result follows.
\end{proof}

\section{Generalizing the Upper Bound}
\label{sec:generalizations}

As advertised in the introduction, the positive results and the proof
method from Section~\ref{sec:upperbound} can be generalized to
the polytopes of many more graph problems than only the 
$k$-independent set polytope.
In this respect one could think about other established problems like, for
example, the $k$-clique or $k$-dominating set.
Instead of giving a list of particular extensions, we provide here a
metaresult covering a whole range of graph problems which share
a common ground with the independent set problem.

% The tools used in Section~\ref{sec:upperbound} for graph classes of bounded
% expansion apply not only to the particular case of independent sets 
% of size at most~$k$, but to a much wider range of problems
% (and in part also to a slightly richer sort of graph classes).
% This generalization can be formulated as a meta-result about 
% graph polytopes definable in first-order logic.

The first step in this generalization is introducing our descriptive
framework, namely the first-order logic of graphs,
and the polytopes associated with graphs under a given logical formula.
At this point the reader should understand that defining such a polytope
for a logical formula cannot be as simple as
Definition~\ref{def:ispolytope}, due to necessity to handle formula
arguments in full generality (as they may not be only mutually 
interchangeable elements of a set).
However, as it will be clear in the case of the independent set polytope,
our polytopes defined from logical formulas naturally form extensions of
what one would call ``standard'' problem polytopes.
In particular, an upper bound on the extension complexity of our FO polytope
from Definition~\ref{def:phipolytope} applies also to such ``standard''
problem polytopes of particular graph problems.

\subsection{FO logic and FO polytope}

The {\em first-order logic of graphs} (abbreviated as FO) applies the
standard language of first-order logic to a graph $G$ viewed as a relational 
structure with the domain $V(G)$ and the single binary (symmetric) relation $E(G)$.
That is, in FO we have got the standard predicate $x=y$, a binary predicate
$edge(x,y)$, usual logical connectives $\wedge,\vee,\to$, and quantifiers
$\forall x$, $\exists x$ over the vertex set $V(G)$.
For example, $\phi(x,y)\equiv \exists z\big(edge(x,z)\wedge edge(y,z)\big)$
states that the vertices $x,y$ have a common neighbor in~$G$.

If $\phi$ is a formula of $k$ free variables and
$W=(w_1,w_2,\dots,w_k)\in V(G)^k$ is such that $\phi(w_1,w_2,\dots,w_k)$ holds
true in $G$, we write $G\models\phi(w_1,w_2,\dots,w_k)$.
Consider now the FO formula 
$$\iota(x_1,x_2,\dots,x_k)\equiv
 \bigwedge_{i\not=j}\big(\neg edge(x_i,x_j)\wedge x_i\not=x_j\big)$$
which is quantifier-free.
It is easy to see that $G\models\iota(w_1,w_2,\dots,w_k)$ if and only if
$\{w_1,w_2,\dots,w_k\}$ is an independent set of size exactly~$k$.

In another example,
$$\delta(x_1,x_2,\dots,x_k)\equiv \forall y
 \bigvee_{i=1,\dots,k}\big(edge(x_i,y)\vee x_i=y\big)$$
is an FO formula with one quantifier such that
$G\models\delta(w_1,\dots,w_k)$ if and only if
$\{w_1,\dots,w_k\}$ is a dominating set (of size $\leq k$).
A more involved example is the following formula with two
quantifiers describing a distance-$2$ dominating set:
$$\delta_2(x_1,\dots,x_k)\equiv \forall y \exists z
 \bigvee_{i=1,\dots,k}\big[ (edge(x_i,z)\vee x_i=z)
	\wedge (edge(z,y)\vee z=y) \big]$$
	
For our purposes we will consider FO logic on graphs \emph{labeled} by  labels from
a finite set $Lab$.
Formally, \textit{vertex labels} are modelled as subsets of $V(G)$ -- for each $a \in Lab$ there is a subset $V_a$ of vertices having label $a$. From FO formulas labels are accessed using unary predicates: $L_a(v)$ is true if and only of $v \in V_a$. As an example, if we work with graphs labeled by $Lab = \{a,b\}$ then the formula
$$\delta'(x_1,x_2,\dots,x_k)\equiv  \forall y~ L_a(y) \Rightarrow
 \bigvee_{i=1,\dots,k} \big(edge(x_i,y)\vee x_i=y\big)$$
such that $G\models\delta(w_1,\dots,w_k)$ if and only if all vertices labeled by $a$ are dominated by
$\{w_1,\dots,w_k\}$. Apart from graphs with labeled vertices, one can also consider graphs with labeled edges. \textit{Edge labels} are realized as subsets $E_a$ of $E(G)$ and are accessed by FO formulas using binary predicates $edge_a(x,y)$, i.e. $edge_a(u,v)$ is true if and only if $\{u,v\} \in E_a$. As an example, if we work with graphs with edges labeled by $Lab = \{a,b\}$ then the formula
$$\delta''(x_1,x_2,\dots,x_k)\equiv  \forall y 
 \bigvee_{i=1,\dots,k} \big(edge_a(x_i,y)\vee x_i=y\big)$$
such that $G\models\delta(w_1,\dots,w_k)$ if and only if 
vertices $\{w_1,\dots,w_k\}$ dominte all vertices of $G$ using only edges labeled by label $a$. Edge labels and vertex labels can be used simultaneously -- this is the setting we use in Lemma~\ref{lem:fullFOtoex}.

Now we assign,
to any FO formula $\phi(x_1,x_2,\dots,x_k)$, a graph polytope as follows.
As we have already mentioned above,
it has to be somehow more complicated than the independent set
polytope since the order of arguments of $\phi$ matters in general, 
and the same vertex may be repeated among the arguments.
For an ordered $k$-tuple of vertices $W=(w_1,w_2,\dots,w_k)\in V(G)^k$
we thus define its characteristic vector $\chi^W$ of length $k|V(G)|$ by
% \say{PH: possible expand/explain?}
$$\chi^W_{v,i} =\left\{\begin{array}{ll}1 & \text{ if } v=w_i ,\\
	 0 & \text{ otherwise.}\end{array}
\right.$$
Note that $\chi^W$ always satisfies $\sum_{v\in V(G)}\chi^W_{v,i}=1$
for each~$i=1,\dots,k$, by the definition.

We can now give the following definition:
\begin{definition}[FO polytope]\label{def:phipolytope}
Let $\phi(x_1,\dots,x_k)$ be an FO formula with $k$ free variables.
The {\em(first-order) $\phi$-polytope of $G$}, denoted by $\FORM_\phi(G)$, is defined 
to be the convex hull of the characteristic vectors of 
every $k$-tuple of vertices of $G$ such that $\phi(w_1,w_2,\dots,w_k)$ holds
true in $G$.
That is, 
\begin{align*}
\FORM\nolimits_\phi(G)=\conv\left(\left\{ \chi^W\in\{0,1\}^n |
 \right.\right. &\> W=(w_1,w_2,\dots,w_k)\in V(G)^k,
\\ &\left.\left. G\models\phi(w_1,w_2,\dots,w_k)^{\vbox to 1ex{}} \right\}\right). 
\end{align*}
\end{definition}

The definition of an FO polytope is, at least in the case of an independent
set problem, indeed very naturally related to Definition~\ref{def:ispolytope}
of the independent set polytope. See the following:
\begin{lemma}
Let $\iota(x_1,\dots,x_k)\equiv
 \bigwedge_{i\not=j}\big(\neg edge(x_i,x_j)\wedge x_i\not=x_j\big)$
(the above $k$-independent set formula).
For every graph $G$,
the $\iota$-polytope $\FORM_\iota(G)$ is an extension of\/ $\STAB_k(G)$.
\end{lemma}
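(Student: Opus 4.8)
The plan is to exhibit an explicit linear (affine) projection $\rho$ from the ambient space $\RR^{k|V(G)|}$ of $\FORM_\iota(G)$ onto the ambient space $\RR^{|V(G)|}$ of $\STAB_k(G)$, and to check that $\rho$ maps $\FORM_\iota(G)$ onto $\STAB_k(G)$. The natural candidate is the ``forgetting of order'' map: given a vector $y\in\RR^{k|V(G)|}$ with coordinates $y_{v,i}$ ($v\in V(G)$, $i\in[k]$), set
\begin{equation*}
\rho(y)_v \;:=\; \sum_{i=1}^{k} y_{v,i}\qquad\text{for each } v\in V(G).
\end{equation*}
This is clearly linear, so it suffices to verify that $\rho$ carries the vertex set of $\FORM_\iota(G)$ onto the vertex set of $\STAB_k(G)$; since both polytopes are convex hulls of these finite vertex sets and $\rho$ is linear, surjectivity of the induced map $\FORM_\iota(G)\to\STAB_k(G)$ follows.

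First I would recall what the vertices are. A vertex of $\FORM_\iota(G)$ is $\chi^W$ for an ordered tuple $W=(w_1,\dots,w_k)$ with $G\models\iota(w_1,\dots,w_k)$; by the observation already made in the excerpt, this holds exactly when $\{w_1,\dots,w_k\}$ is an independent set of size \emph{exactly} $k$ (the conjunction $\bigwedge_{i\ne j}x_i\ne x_j$ forces the $w_i$ pairwise distinct). A vertex of $\STAB_k(G)$ is $\chi^S$ for an independent set $S$ with $|S|=k$. Now for such a $W$, each vertex $v$ equals $w_i$ for at most one index $i$ (distinctness), and $v\in\{w_1,\dots,w_k\}$ iff it equals $w_i$ for exactly one $i$; hence $\rho(\chi^W)_v=\sum_i\chi^W_{v,i}$ equals $1$ if $v\in\{w_1,\dots,w_k\}$ and $0$ otherwise, i.e.\ $\rho(\chi^W)=\chi^{\{w_1,\dots,w_k\}}$, which is a vertex of $\STAB_k(G)$. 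Conversely, given an independent set $S=\{s_1,\dots,s_k\}$ of size $k$, pick any ordering $W=(s_1,\dots,s_k)$; then $G\models\iota(s_1,\dots,s_k)$, so $\chi^W$ is a vertex of $\FORM_\iota(G)$ with $\rho(\chi^W)=\chi^S$. Thus $\rho$ maps the vertices of $\FORM_\iota(G)$ onto those of $\STAB_k(G)$.

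To finish, I would note $\rho\big(\FORM_\iota(G)\big)=\rho\big(\conv\{\chi^W\}\big)=\conv\{\rho(\chi^W)\}=\conv\{\chi^S : S \text{ independent}, |S|=k\}=\STAB_k(G)$, using that a linear map commutes with taking convex hulls and that the image of the vertex set is exactly the generating set of $\STAB_k(G)$. Hence $\STAB_k(G)$ is a linear projection of $\FORM_\iota(G)$, i.e.\ $\FORM_\iota(G)$ is an extension of $\STAB_k(G)$, as claimed. There is no real obstacle here: the only point requiring a moment's care is that the formula $\iota$ encodes independent sets of size \emph{exactly} $k$ with \emph{distinct} coordinates, which is precisely what makes the coordinate sum $\sum_i\chi^W_{v,i}$ land in $\{0,1\}$ rather than taking larger values; if one used a formula allowing repeated arguments the map $\rho$ would not send $0/1$ vectors to $0/1$ vectors and a different (still linear, but clamping-free) argument would be needed. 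One may also remark that the recession/lineality space of the projection $\rho$ is $\{y : \sum_i y_{v,i}=0\ \forall v\}$, but this is not needed for the statement.
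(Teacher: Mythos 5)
Your proof is correct and follows essentially the same route as the paper: the paper also uses the coordinate-summing projection $y_v=\sum_{i=1}^k\chi^W_{v,i}$ and states (more tersely) that it maps $\FORM_\iota(G)$ onto $\STAB_k(G)$. Your more detailed verification that the map carries vertices onto vertices, relying on the distinctness clauses of $\iota$, is exactly the content the paper leaves implicit.
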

\begin{proof}
If $G$ has $n$ vertices then
$$\STAB\nolimits_k(G)=\left\{y\in\RR^n\left|\> y_v=\sum_{i=1}^k\chi^W_{v,i},~
\chi^W\in\FORM\nolimits_\iota(G)\right.\right\}.$$
Therefore, $\STAB_k(G)$ is a
projection of $\FORM_\iota(G)$ given by the projection map described by
$y_v=\sum_{i=1}^k\chi^W_{v,i}$ for all vertices $v$ of $G$.
\end{proof}

\subsection{Upper bound for existential FO}

For the subsequent arguments we recall the following weaker form%
\footnote{The original result of Kolman et al.  applies
to Monadic Second Order logic: a logic that subsumes FO logic.} 
of a recent result of Kolman et al.~\cite{KKT15}:

\begin{theorem}[Kolman, Kouteck\'y and Tiwary \cite{KKT15}]\label{thm:fo_tw}
Let $\phi(x_1,x_2,\dots,x_k)$ be an FO formula with $k$ free variables
and~$\ell$ quantifiers.
Then there exists a computable function $g:\NN\times\NN\to\NN$, such that
%\say{check again the formulation!}
$$\xc\!\big(\FORM\nolimits_\phi(G)\big)\leqslant g(k+\ell,\tau)\cdot n $$ 
holds for every integer $n$ and every $n$-vertex graph~$G$ of treewidth $\tau$.
Furthermore, this extension can be computed in linear time
for fixed $k,\ell$ and~$\tau$.
%  $f'(k,\tau)\cdot n$ for some computable function $f':\NN\times\NN\to\NN$.
\end{theorem}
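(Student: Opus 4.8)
The plan is to realise $\FORM_\phi(G)$ as a linear projection of the \emph{polytope of accepting runs of a tree automaton} built along a tree decomposition of $G$ of width $\tau$; the number of facets of that polytope will be $\Oh(n)$ with the hidden constant a computable function of $k,\ell,\tau$, which is exactly $g(k+\ell,\tau)\cdot n$. First I would, using the linear-time algorithm of~\cite{bod96}, compute a nice tree decomposition $(T,\mathcal W)$ of $G$ of width $\tau$ with $\Oh(n)$ nodes (introduce/forget nodes of arity one, join nodes of arity two, leaves of arity zero). I encode $G$ together with a \emph{placement} assigning each free variable $x_i$ to a vertex $w_i$ of $G$ as a labelled tree over a finite alphabet $\Sigma$, of size bounded in terms of $k$ and $\tau$: the label at the introduce-node of a vertex records its adjacencies inside the bag and the set $S\subseteq[k]$ of free variables placed on it. By the standard tree-automaton characterisation of monadic second-order (hence, a fortiori, first-order) model checking on graphs of bounded treewidth (Courcelle's theorem, in the automata formulation of Thatcher and Wright), there is a \emph{deterministic bottom-up tree automaton} $\mathcal A$ over $\Sigma$ whose number of states is bounded by a computable function $M=M(k+\ell,\tau)$ and which accepts the labelled tree of a placement $W=(w_1,\dots,w_k)$ if and only if $W$ is a genuine $k$-tuple (each $x_i$ placed exactly once) and $G\models\phi(w_1,\dots,w_k)$. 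A state of $\mathcal A$ carries the set of free variables placed so far together with the relevant quantifier-rank-$\le\ell$ type on the current bag, which accounts for the dependence $M(k+\ell,\tau)$.

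Next I would write down the run-polytope. A \emph{run} of $\mathcal A$ on the fixed tree $T$ assigns a state to every node, consistently with the transitions, ending in an accepting state at the root. Introduce variables $x_{t,q}\ge 0$ for every node $t$ and state $q$, and $y_{t,\theta}\ge 0$ for every non-leaf node $t$ and transition $\theta$ valid at $t$, and let $\Pi$ be the polytope cut out by: $\sum_q x_{t,q}=1$ at every node; $\sum_\theta y_{t,\theta}=1$ at every non-leaf node; the ``flow'' equations stating that, at a non-leaf node $t$, the coordinate $x_{t,q}$ equals the total $y_{t,\cdot}$-weight of transitions with top state $q$, and that for each child $c$ of $t$ the coordinate $x_{c,q}$ equals the total $y_{t,\cdot}$-weight of transitions whose $c$-slot is $q$; and $x_{t,q}=0$ whenever $q$ is an inadmissible state at the leaf/root $t$. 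This system has $\Oh(n\cdot M^{\Oh(1)})$ variables and inequalities, so $\size(\Pi)=\Oh(n)$ with the constant a function of $k,\ell,\tau$. The $0/1$ points of $\Pi$ are precisely the encodings of accepting runs, and the original $\chi^W$-coordinates are recovered by the \emph{linear} map $\chi^W_{v,i}=\sum_{\theta}y_{t_v,\theta}$, where $t_v$ is the introduce-node of $v$ and the sum runs over transitions $\theta$ at $t_v$ whose label places $x_i$ on $v$. Hence $\FORM_\phi(G)$ is a projection of $\Pi$, and $\xc(\FORM_\phi(G))\le\size(\Pi)\le g(k+\ell,\tau)\cdot n$, provided we establish that $\Pi$ is integral.

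The integrality of $\Pi$ is the step I expect to be the main obstacle, and it is the tree analogue of the elementary fact that the convex hull of source-to-sink paths in a DAG is a flow polytope; the new feature is the branching at join nodes, which is tamed by the per-node simplex constraints $\sum_q x_{t,q}=1$. I would prove, by induction on $T$ from the leaves upward, that for every node $t$ the subsystem of $\Pi$ in the variables attached to the subtree $T_t$ (with no constraint imposed on $x_{t,\cdot}$ itself) equals the convex hull of the integral partial runs on $T_t$, and more precisely that its face $x_{t,q}=1$ is the convex hull of those partial runs with state $q$ at $t$. The leaf case is immediate. For an internal node $t$ with children, given any feasible point, its $y_t$-part is already a convex combination $\sum_\theta y_{t,\theta}[\theta]$ of single transitions, and the flow equations force the restriction of the point to each child subtree to be the corresponding convex combination $\sum_\theta y_{t,\theta}\,p^{(\theta)}$ where $p^{(\theta)}$ is a feasible point of that child subtree whose state at the child equals the matching slot of $\theta$; applying the induction hypothesis to each $p^{(\theta)}$ and multiplying the weights out writes the original point as a convex combination of concatenations "$\theta$ at $t$" $\oplus$ "partial run on the left subtree" $\oplus$ "partial run on the right subtree", each of which is a valid integral partial run on $T_t$. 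Intersecting at the root with the acceptance face preserves integrality, so $\Pi=\conv\{\text{accepting runs}\}$, completing the bound. Finally, the whole construction — the decomposition~\cite{bod96}, the input-independent automaton, and a single pass over $T$ emitting $\Pi$ — runs in time $\Oh(n)$ once $k,\ell,\tau$ are fixed, giving the ``furthermore'' clause; and since the argument uses only that FO is contained in MSO, it yields verbatim the stronger MSO version recorded in the footnote.
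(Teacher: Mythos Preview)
The paper does not prove this theorem: it is quoted from Kolman, Kouteck\'y and Tiwary~\cite{KKT15} (in a weakened FO form, cf.\ the footnote) and used as a black box in the proofs of Lemma~\ref{lem:exFOupper} and Theorem~\ref{lem:fullFOupper}. There is therefore no in-paper argument to compare your proposal against.

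That said, your sketch is a sound reconstruction of the standard approach, and is in spirit the method of~\cite{KKT15}: compile $\phi$ into a tree automaton over the decomposition, describe the set of (placement, accepting run) pairs by a flow-type LP along the decomposition tree, establish integrality by induction on the tree, and project. One step deserves slightly more care, namely the join case of the integrality induction. The flow equations do not \emph{force} the child-subtree restrictions to decompose as $\sum_\theta y_{t,\theta}\,p^{(\theta)}$; rather, one \emph{chooses} each $p^{(\theta)}$ to depend only on the relevant child state of $\theta$ (obtained by grouping the inductive convex decomposition of that child's subtree by its root state), and then the marginals of $y_t$ agree with $x_{c_1,\cdot}$ and $x_{c_2,\cdot}$ by the flow equations, so the weighted combination indeed recovers the original point on each subtree independently. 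With this clarification the argument goes through, and since it only uses the tree-automaton characterisation of MSO, it does yield the stronger MSO statement mentioned in the footnote.
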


Using this and the decomposition provided by Theorem \ref{thm:low_td_coloring}, 
we are able to directly extend Theorem~\ref{thm:bdexptoxc}
to the following restrictive fragment of FO logic.
We say that an FO formula $\phi(x_1,\dots,x_k)$ is {\em existential FO}
if it can be written as $\phi(x_1,\dots,x_k)\equiv\exists y_1\dots y_\ell
 \,\psi(x_1,\dots,x_k,y_1,\dots,y_\ell)$,
where $\psi$ is quantifier-free.

\begin{lemma}\label{lem:exFOupper}
Let $\phi(x_1,x_2,\dots,x_k)$ be an existential FO formula with $k$ free variables
and $\ell$ quantifiers.
Also, let $\mathcal{G}$ be any graph class of bounded expansion.
Then there exists a computable function $f:\NN\to\NN$, 
depending on the expansion function of~$\mathcal{G}$, such that
$$\xc\!\big(\FORM\nolimits_\phi(G)\big)\leqslant f(k+\ell)\cdot n $$ 
holds for every integer $n$ and every $n$-vertex graph~$G\in\mathcal{G}$.
Furthermore, an explicit extension of $\FORM\nolimits_\phi(G)$ of size at most
$f(k+\ell)\cdot n$ can be found in linear time for fixed $\mathcal{G}$ and $k,\ell$.
\end{lemma}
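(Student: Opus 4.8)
The plan is to mirror almost verbatim the proof of Theorem~\ref{thm:bdexptoxc}, replacing the tree\-width bound from Theorem~\ref{thm:tw_low_xc} with the logic\-aware bound of Theorem~\ref{thm:fo_tw}. First I would apply Theorem~\ref{thm:low_td_coloring} to $G\in\mathcal{G}$ with order parameter $k+\ell$ (the total number of variables in $\psi$), obtaining a low treedepth coloring $c:V(G)\to[N_{\mathcal{G}}(k+\ell)]$ using at most $N_{\mathcal{G}}(k+\ell)$ colors, computable in linear time for fixed $k+\ell$. Set $\mathcal{J}:=\binom{[N_{\mathcal{G}}(k+\ell)]}{k+\ell}$ and, for $J\in\mathcal{J}$, let $G_J$ be the subgraph of $G$ induced on the union of the color classes indexed by~$J$; each such $G_J$ has treedepth at most $k+\ell$, hence treewidth at most $k+\ell-1$ by Observation~\ref{obs:tdtw}.

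The crucial combinatorial point is the analogue of fact (b) from Theorem~\ref{thm:bdexptoxc}: for every $k$-tuple $W=(w_1,\dots,w_k)\in V(G)^k$ with $G\models\phi(W)$, we can pick witnesses $u_1,\dots,u_\ell$ for the existential quantifiers so that $G\models\psi(w_1,\dots,w_k,u_1,\dots,u_\ell)$. The set $\{w_1,\dots,w_k,u_1,\dots,u_\ell\}$ has at most $k+\ell$ vertices, so there is some $J\in\mathcal{J}$ with all of these vertices in $V(G_J)$. Since $\psi$ is quantifier\-free, its truth value on a fixed tuple of vertices depends only on the equalities and adjacencies among those vertices, which are identical in $G$ and in the induced subgraph $G_J$; hence $G_J\models\phi(w_1,\dots,w_k)$ as well, witnessed by the same $u_1,\dots,u_\ell$. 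Conversely $G_J\models\phi(W)$ trivially implies $G\models\phi(W)$ for $W\in V(G_J)^k$, again because $\psi$ is quantifier\-free and $G_J$ is an induced subgraph — there is no issue with the existential witnesses lying outside $G_J$ because we only need existence in the \emph{larger} graph. Consequently every vertex of $\FORM_\phi(G)$ is a vertex of some $\FORM_\phi(G_J)$ and vice versa (up to padding coordinates outside $V(G_J)$ with zeros, which is consistent with $\chi^W_{v,i}=0$ for $v\notin\{w_1,\dots,w_k\}$), so
$$\FORM\nolimits_\phi(G)=\conv\Big(\bigcup\nolimits_{J\in\mathcal{J}}\FORM\nolimits_\phi(G_J)\Big).$$

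It then remains to bound $\xc(\FORM_\phi(G_J))$ and assemble. Each $G_J$ has treewidth at most $\tau:=k+\ell-1$, so Theorem~\ref{thm:fo_tw} gives $\xc(\FORM_\phi(G_J))\leqslant g(k+\ell,k+\ell-1)\cdot|V(G_J)|\leqslant g(k+\ell,k+\ell-1)\cdot n$, computable in linear time for fixed $k,\ell$. Applying Balas' bound (Theorem~\ref{thm:union_balas}) over the $|\mathcal{J}|\leqslant\binom{N_{\mathcal{G}}(k+\ell)}{k+\ell}$ pieces yields
$$\xc\!\big(\FORM\nolimits_\phi(G)\big)\leqslant|\mathcal{J}|+\sum_{J\in\mathcal{J}}\xc\!\big(\FORM\nolimits_\phi(G_J)\big)\leqslant\binom{N_{\mathcal{G}}(k+\ell)}{k+\ell}\big(1+g(k+\ell,k+\ell-1)\big)\cdot n=:f(k+\ell)\cdot n,$$
and $f$ is computable since $N_{\mathcal{G}}$ and $g$ are. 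The linear\-time construction follows exactly as in Theorem~\ref{thm:bdexptoxc}: the coloring, the per\-piece extensions from Theorem~\ref{thm:fo_tw}, and the union construction of~\cite{balas21} are each linear for fixed parameters. I expect the only subtle point — and the one deserving a careful sentence in the writeup — to be the equivalence $G\models\phi(W)\iff G_J\models\phi(W)$ for the right choice of $J$: this is where it is essential that $\phi$ is \emph{existential} (so that the witnesses can be absorbed into the bounded-size set that forces the choice of $J$) and that $\psi$ is \emph{quantifier\-free} (so that truth is preserved under passing to an induced subgraph containing all relevant vertices). For a general FO formula with quantifier alternation this argument breaks down, which is precisely why the full FO case in the following sections needs the more elaborate machinery with labels.
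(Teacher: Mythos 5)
Your proposal is correct and follows essentially the same route as the paper's own proof: the same low treedepth coloring of order $k+\ell$, the same decomposition of $\FORM_\phi(G)$ as the convex hull of the pieces $\FORM_\phi(G_J)$ justified by the two model-theoretic facts about existential formulas (witness absorption into a set of size $\le k+\ell$, and preservation under passing to/from induced subgraphs), and the same assembly via Theorem~\ref{thm:fo_tw}, Observation~\ref{obs:tdtw}, and Balas' union bound. Your elaboration of why the equivalence $G\models\phi(W)\iff G_J\models\phi(W)$ needs both existentiality and quantifier-freeness of $\psi$ is exactly the content the paper compresses into its facts (a) and (b), so there is nothing to add.
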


%%%%%%%%%%%%%%%%%%%%%%%%%%%%%%%%%%%%%%%%%%%%
% Since the proof of Theorem~\ref{thm:exFOupper} 
% is a straightforward extension of the proof of
% Theorem~\ref{thm:bdexptoxc}, we skip it in this restricted paper.
% \begin{comment}
\begin{proof}
We start with two simple facts from model theory:
\begin{itemize}
\item[a)] If $H$ is an induced subgraph of $G$,
and $H\models\phi(w_1,\dots,w_k)$ for $w_1,\dots,w_k\in V(H)$,
then $G\models\phi(w_1,\dots,w_k)$ (since $\phi$ is existential).
\item[b)] If $G\models\phi(w_1,\dots,w_k)$ for any 
$W=\{w_1,\dots,w_k\}\subseteq V(G)$,
then there is $U\subseteq V(G)$, $|U|\leq\ell$, such that
$G[W\cup U]\models\phi(w_1,\dots,w_k)$ where $G[W\cup U]$ is the subgraph of
$G$ induced on $W\cup U$
(since $\phi$ has~$\leq\ell$ quantifiers).
\end{itemize}

We can hence apply the same technique as in the proof of
Theorem~\ref{thm:bdexptoxc} -- using a low treedepth coloring
$c$ of $G$ now by $N_{\mathcal{G}}(k+\ell)$ colors from
Theorem~\ref{thm:low_td_coloring}.
Again, let $\mathcal{J}_{k+\ell}:={[N_{\mathcal{G}}(k+\ell)] \choose k+\ell}$ denote the set
of $(k+\ell)$-element subsets of~$[N_{\mathcal{G}}(k+\ell)]$,
and let a subgraph $G_J\subseteq G$ where $J\in \mathcal{J}_{k+\ell}$, be defined 
as the subgraph of $G$ induced on 
$\bigcup_{j\in\mathcal{J}_{k+\ell}} c^{-1}(j)$~ -- the color classes of $c$ indexed by~$J$.
By a),b) we immediately get
$$\FORM\nolimits_\phi(G) = \conv\left(
	\bigcup\nolimits_{J\in \mathcal{J}_{k+\ell}} \FORM\nolimits_\phi(G_J)
\right)
.$$

From Theorems~\ref{thm:union_balas} and~\ref{thm:fo_tw}
(via Observation~\ref{obs:tdtw}) we analogously conclude
\begin{align*}\qquad
\xc\!\big(\FORM\nolimits_\phi(G)\big) &\leqslant
	|\mathcal{J}_{k+\ell}|+\sum_{J\in\mathcal{J}_{k+\ell}}
		\xc\!\big(\FORM\nolimits_\phi(G_J)\big)
\\ &\leqslant |\mathcal{J}_{k+\ell}|\cdot\big(1+g(k+\ell,k+\ell-1)\cdot n\big)
	\leqslant f(k+\ell)\cdot n
\,.\end{align*}
Again, this extended formulation can be constructed in linear time for
fixed $k,\ell$.
\end{proof}
% \end{comment}
%%%%%%%%%%%%%%%%%%%%%%%%%%%%%%%%%%%%%%%%%%%%

\subsection{Extension towards full FO}

Existential FO is a rather restricted fragment as, for example,
the vertex cover or dominating set problems cannot (at least not
immediately) be formulated in it.
Though, using another established logical tool, explained next, 
we can circumvent this restriction and cover problems
in full FO logic of graphs, i.e., allowing
also for universal quantifiers in the problem expression.
We base our approach
% to constructing polytopes for FO definable properties 
on the exposition of an FO model checking algorithm for graphs of 
bounded expansion presented in~\cite{GK09}.

For $i,q\in\NN$, $i\leq q$,
we say that two $i$-tuples of vertices $\bar{u}, \bar{v}\in V(G)^i$ have the same
\textit{logical $q$-type}\footnote{Our logical types correspond to
\emph{full types} (Definition 8.17) in~\cite{GK09}} (denoted by
$tp_i^q(\bar{u}) = tp_i^q(\bar{v})$\,) if they satisfy the same set of FO formulas
with quantifier rank at most $q-i$.  We note that even though there are
infinitely many formulas of a given quantifier rank, there are only finitely
many semantically different ones, and therefore there are only finitely many
$q$-types.
Let $\mathcal{T}_i^q$ denote the finite set of all $q$-types of $i$-tuples of
vertices.

The following lemma (Lemma 8.21 in~\cite{GK09} adjusted to our setting) says
that on graph classes of bounded expansion one can reduce the problem of
determining the $q$-type of an $i$-tuple of vertices of $G$ to evaluating
certain \textit{existential} FO formula on a suitable and efficiently computable
labeling of $G$.

\begin{lemma}[\cite{GK09}]\label{lem:fullFOtoex}
Let $\mathcal{G}$ be a class of graphs of bounded expansion, and let $q \ge 0$.
There exists $r := r(q, \mathcal{G}) \in \mathbb{N}$ and a finite set
$\text{Lab}_r$ of special labels such that the following holds 
for all $ 1 \le i \le q$:
there are existential first-order formulas
$\psi_t(x_1,\ldots,x_i)$ for $t \in \mathcal{T}_i^q$, 
using labels from $\text{Lab}_r$, 
such that for every graph $G \in \mathcal{G}$
and a low treedepth coloring $c$ of $G$ of order~$r$
(using $N_{\mathcal{G}}(r)$ colors),
it is possible to efficiently (in polynomial time for fixed $\mathcal{C}$ and $q$) label vertices and edges of $G$ using $c$ and labels
from $\text{Lab}_r$, to get a labeled graph $G(c)$
% it is possible to efficiently compute a labeled graph
% $G(c)$ from $G$ and the coloring~$c$
such that for every tuple $\bar{v} \in V(G)^i$ it holds
$$ G(c) \models \psi_t(v_1,\ldots,v_i)
 \quad \text{if and only if} \quad tp_i^q(\bar{v})=t \text{ in $G$}.$$
\end{lemma}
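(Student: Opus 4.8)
The plan is to follow the Gaifman-style normal form / type-refinement argument underlying the FO model-checking algorithm of~\cite{GK09}, and simply re-package its output in the language of extension complexity. The statement is really a transfer lemma, so I expect most of the work to be an unfolding of known machinery rather than new ideas; the genuinely delicate point is the \emph{uniformity} — that a single radius $r$ and a single finite label set $\mathrm{Lab}_r$ serve \emph{all} $i\le q$ simultaneously, and that the formulas $\psi_t$ are existential (no universal quantifiers left).

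First I would fix the class $\mathcal{G}$ and $q$, and recall the key structural fact for bounded expansion: a low treedepth coloring of order $r$ (for a suitably large $r=r(q,\mathcal{G})$, to be chosen below) decomposes $G$ into pieces of bounded treedepth, and — crucially — the quantifier-elimination procedure for FO over bounded-treedepth graphs is effective. The heart of the argument is the inductive type-elimination: proceeding from $i=q$ downwards to $i=1$, one shows that the $q$-type $tp_i^q(\bar v)$ of an $i$-tuple is determined by a Boolean combination of (a) the atomic type of $\bar v$ (which edges/equalities hold), (b) for each vertex $v_j$, local/neighborhood information around $v_j$ at bounded radius, and (c) for each $(i{+}1)$-type $t'$, whether there \emph{exists} a vertex $w$ extending $\bar v$ to a tuple of type $t'$. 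Item (c) is where the existential quantifier is introduced, and where one appeals to the already-handled case $i{+}1$; the finiteness of $\mathcal{T}_{i+1}^q$ keeps the Boolean combination finite. The labels in $\mathrm{Lab}_r$ are exactly what is needed to record items (a) and (b) as \emph{atomic} data on the low-treedepth-colored graph: vertex labels encode, for each vertex, the isomorphism type of its bounded-radius ball intersected with each tuple of color classes (this is finite by bounded expansion plus bounded treedepth of the unions), and edge labels encode the relevant short-distance relations that a quantifier-free formula can then read off directly. One then chooses $r$ large enough to dominate the radii arising in all these balls across all $i\le q$ — this is where a single $r$ for the whole range is pinned down.

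The last step is to verify the two demands in the statement: (i) the formulas $\psi_t$ are existential FO over the expanded signature $\mathrm{Lab}_r$ — this holds by construction, since every universal-looking requirement has been absorbed either into an atomic label or into a finite disjunction/conjunction over existentially witnessed extension types; and (ii) the labeling $G\mapsto G(c)$ is computable in polynomial time for fixed $\mathcal{G}$ and $q$ — this follows because the low treedepth coloring $c$ is computable in (linear) time by Theorem~\ref{thm:low_td_coloring}, each vertex/edge label depends only on a bounded-radius, bounded-treedepth neighborhood whose isomorphism type can be computed in constant time per vertex once $c$ is fixed, and there are only $n$ vertices and $O(n)$ edges (by $\nabla_0(\mathcal{G})$ bounded) to process. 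Putting (i) and (ii) together with the correctness of the inductive type-elimination gives, for each tuple $\bar v\in V(G)^i$, the equivalence $G(c)\models\psi_t(v_1,\ldots,v_i)\iff tp_i^q(\bar v)=t$, as claimed.

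The main obstacle I anticipate is purely bookkeeping rather than conceptual: keeping the recursion on $i$ (from $q$ down to $1$) consistent with a \emph{fixed} label set $\mathrm{Lab}_r$ and a \emph{fixed} radius $r$, so that the labeling does not have to be redone for each $i$, and arguing that the number of distinct labels stays bounded by a function of $q$ and $\mathcal{G}$ only. Since this is precisely the content of Lemma~8.21 of~\cite{GK09} (only adjusted to include edge labels and to phrase the output as "$tp_i^q(\bar v)=t$" rather than in their notation), I would present the proof as an invocation of that lemma together with the remark that their construction already yields existential formulas and a polynomial-time labeling, and spell out only the signature translation (their full types $\leftrightarrow$ our $\mathcal{T}_i^q$, their labeled structures $\leftrightarrow$ our $G(c)$).
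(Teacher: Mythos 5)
The paper gives no proof of this lemma at all: it is imported verbatim as ``Lemma 8.21 in \cite{GK09} adjusted to our setting,'' which is exactly what your proposal ultimately does after sketching the underlying quantifier-elimination induction. Your sketch of that induction (type elimination from $i=q$ down to $i=1$, absorbing universal requirements into labels and finite disjunctions over existentially witnessed extension types) is a faithful outline of the cited machinery, so the proposal matches the paper's approach.
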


% \begin{theorem}
% Let $\mathcal{C}$ be a class of graphs of bounded expansion. Let $q \ge 0$. Let $G \in \mathcal{C}$ and $\gamma: V(G) \rightarrow \Gamma$ be a td-$q$-colouring. 
% 
%  There is $r := r(q, \mathcal{C}) \in \mathbb{N}$ such that for all $ 1 \le i \le q$ and all $t \in \mathcal{T}_i^q$ there are existential first-order formulas $\psi_t(x_1,\ldots,x_i)$ using labels from finite set $\text{Lab}^r_{\Gamma'}$ depending only on $r$ and $\Gamma'$, where $\Gamma'$ is a set of $N_{\mathcal{C}}(r)$ colours disjoint from $\Gamma$, such that for every $G \in \mathcal{C}$, $\bar{v} \in V(G)^i$ and td-$r$-colouring $\gamma': V(G) \rightarrow \Gamma'$ it holds
%  $$ G(\gamma') \models \psi_t(x_1,\ldots,x_i) \quad \text{if and only if} \quad tp_i^q(\bar{a})= t,$$
%  where $G(\gamma')$ is graph $G$ labelled with labels from $\text{Lab}^r_{\Gamma'}$.
% \end{theorem}

Now we state the final strengthening of Lemma~\ref{lem:exFOupper}.

\begin{theorem}\label{lem:fullFOupper}
Let $\phi(x_1,\dots,x_k)$ be an FO formula with $k$ free variables
and $\ell$ quantifiers.
Also, let $\mathcal{G}$ be any graph class of bounded expansion.
Then there exists a computable function $f:\NN\to\NN$, 
depending on~$\mathcal{G}$, such that
$$\xc\!\big(\FORM\nolimits_\phi(G)\big)\leqslant f(k+\ell)\cdot n $$ 
holds for every integer $n$ and every $n$-vertex graph~$G\in\mathcal{G}$.
Furthermore, an explicit extension of $\FORM\nolimits_\phi(G)$ of size at most
$f(k+\ell)\cdot n$ can be found in polynomial time for fixed $\mathcal{G}$ and $k,\ell$.
\end{theorem}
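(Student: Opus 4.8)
The plan is to reduce arbitrary FO formulas to the existential FO case handled by Lemma~\ref{lem:exFOupper}, using the type-elimination machinery of Lemma~\ref{lem:fullFOtoex} as the bridge. First I would set $q := k+\ell$, so that the free-variable tuples we care about are $k$-tuples and the formula $\phi$ has quantifier rank at most $\ell \le q-k$. Apply Lemma~\ref{lem:fullFOtoex} with this $q$ and the class $\mathcal{G}$ to obtain the integer $r = r(q,\mathcal{G})$, the finite label set $\text{Lab}_r$, and for each type $t\in\mathcal{T}_k^q$ an existential formula $\psi_t(x_1,\dots,x_k)$ over $\text{Lab}_r$. The key observation is that the set $\mathcal{T}_\phi := \{\,t\in\mathcal{T}_k^q : \phi\in t\,\}$ of $q$-types ``containing'' $\phi$ is finite and depends only on $\phi$ (and $q$), and a $k$-tuple $\bar v$ satisfies $\phi$ in $G$ if and only if $tp_k^q(\bar v)=t$ for some $t\in\mathcal{T}_\phi$, which by Lemma~\ref{lem:fullFOtoex} holds iff $G(c)\models\psi_t(v_1,\dots,v_k)$ for some $t\in\mathcal{T}_\phi$, where $G(c)$ is the efficiently computable labeled graph built from a low-treedepth coloring $c$ of $G$ of order~$r$.

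Next I would translate this into a polytope identity. Since the characteristic vector $\chi^W$ of Definition~\ref{def:phipolytope} depends only on the underlying vertex tuple and not on the graph, and since $G(c)$ has the same vertex set as $G$, we have
$$\FORM\nolimits_\phi(G) \;=\; \conv\!\Big(\bigcup\nolimits_{t\in\mathcal{T}_\phi}\FORM\nolimits_{\psi_t}\big(G(c)\big)\Big),$$
because the left-hand side is the convex hull of $\chi^W$ over all $W$ with $G\models\phi(W)$, and that set of tuples is exactly the union over $t\in\mathcal{T}_\phi$ of the tuples with $G(c)\models\psi_t(W)$. Now each $\psi_t$ is an existential FO formula (over a graph labeled by a \emph{finite} label set), $G(c)$ lies in a labeled class of bounded expansion derived from $\mathcal{G}$ (labels do not affect sparsity), and $\psi_t$ has at most $q$ free variables plus boundedly many existential quantifiers. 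One needs the labeled-graph version of Lemma~\ref{lem:exFOupper}; this follows verbatim from its proof since Theorems~\ref{thm:low_td_coloring} and~\ref{thm:fo_tw} (the latter being FO model checking bounded by treewidth, which tolerates a finite label set) go through with labels. Hence $\xc(\FORM_{\psi_t}(G(c)))\le f_0(q')\cdot n$ for a computable $f_0$ depending on $\mathcal{G}$ and $\text{Lab}_r$, where $q'$ bounds the free variables plus quantifiers of the $\psi_t$'s.

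Finally I would assemble the bound via Balas' theorem (Theorem~\ref{thm:union_balas}): since $|\mathcal{T}_\phi|\le|\mathcal{T}_k^q|$ is a finite quantity depending only on $k+\ell$,
$$\xc\!\big(\FORM\nolimits_\phi(G)\big)\;\le\;|\mathcal{T}_\phi|+\sum_{t\in\mathcal{T}_\phi}\xc\!\big(\FORM\nolimits_{\psi_t}(G(c))\big)\;\le\;|\mathcal{T}_k^q|\cdot\big(1+f_0(q')\cdot n\big)\;\le\;f(k+\ell)\cdot n,$$
and the whole construction is carried out in polynomial time for fixed $\mathcal{G}$ and $k,\ell$: the low-treedepth coloring of order $r$ comes from Theorem~\ref{thm:low_td_coloring}, the labeling $G(c)$ from Lemma~\ref{lem:fullFOtoex}, the per-type extensions from the effective version of Lemma~\ref{lem:exFOupper}, and the union of polytopes from~\cite{balas21}. (Note the running time degrades from linear to polynomial because Lemma~\ref{lem:fullFOtoex} only guarantees polynomial-time labeling.) The main obstacle I anticipate is not any single estimate but verifying carefully that the passage to labeled graphs is harmless throughout: that a finite vertex- and edge-label set preserves bounded expansion, that Theorem~\ref{thm:fo_tw} and the low-treedepth coloring theorem are insensitive to such labels, and that the formulas $\psi_t$ — which are produced by the $q$-type machinery and may not literally be written in the $\exists^*$ prenex shape — genuinely fall (or can be put) into the existential fragment required by Lemma~\ref{lem:exFOupper}. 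Making these bookkeeping points precise, while invoking Lemma~\ref{lem:fullFOtoex} as a black box, is the crux of the argument.
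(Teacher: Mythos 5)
Your proposal follows essentially the same route as the paper's own proof: set $q=k+\ell$, invoke Lemma~\ref{lem:fullFOtoex} to replace $\phi$ by the finitely many existential formulas $\psi_t$ for the $q$-types containing $\phi$ on the labeled graph $G(c)$, express $\FORM_\phi(G)$ as the convex hull of the union of the $\FORM_{\psi_t}(G(c))$, and conclude via Lemma~\ref{lem:exFOupper} and Theorem~\ref{thm:union_balas}. The argument is correct, and your explicit attention to the harmlessness of the finite label set (for bounded expansion, for Theorem~\ref{thm:fo_tw}, and for the low-treedepth coloring) is a point the paper treats only implicitly.
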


\begin{proof}
We set $q:=k+\ell$ and $i:=k$, and first apply Lemma~\ref{lem:fullFOtoex}
to obtain the labeling $G(c)$ of~$G$ (note that $G(c)$ has the same
underlying graph as~$G$ and so having the same bounded expansion) 
and the existential FO formulas $\psi_t$.
Let $\mathcal{T}'\subseteq\mathcal{T}_k^q$ be the subset of those $q$-types
which include our~$\phi$.
Then, by the definition of type and Lemma~\ref{lem:fullFOtoex}, we have 
$G\models\phi(\bar{v})$ for $\bar{v} \in V(G)^i$,
if and only if $G(c)\models \psi_t(\bar{v})$ for some $t\in\mathcal{T}'$.

Hence the polytope $\FORM\nolimits_\phi(G)$ is the convex hull of the union
of the polytopes $\FORM_{\psi_t}\!\big(G(c)\big)$ for $t\in\mathcal{T}'$.
Since the cardinality of $\mathcal{T}'\subseteq\mathcal{T}_k^q$ is finite and
bounded in terms of~$q=k+\ell$,
and the formulas $\psi_t$ depend only on $t\in\mathcal{T}_k^q$ 
and $q$ for a fixed class $\mathcal{G}$,
our result now directly follows from Lemma~\ref{lem:exFOupper},
applied to each $t\in\mathcal{T}'$, via Theorem~\ref{thm:union_balas}.
\end{proof}

\section{Nowhere Dense Classes}
\label{sec:nowheredense}

In this section we present yet another extension of
Theorem~\ref{thm:bdexptoxc}, studying the $k$-independent set polytope
(and more generally existential FO polytopes) on
graph classes larger than those with bounded expansion.

A graph class $\mathcal{G}$ is {\em nowhere dense}~\cite{NOdM12} 
if there is no integer $d$ such that $\mathcal{G}\gradd d$ contains all graphs.
Every graph class of bounded expansion is nowhere
dense, but the converse is not true.
The $k$-independent-set problem, and existential FO problems in greater
generality, are also known to be in FPT on every
nowhere dense class~\cite{NOdM12}, see also a more general result of~\cite{gks14}.
It is natural to ask whether the same can hold for the fixed-parameter
extension complexity of their polytopes.
Indeed, the following similarly holds true.

\begin{theorem}\label{thm:nodenseexFO}
Let $\phi(x_1,x_2,\dots,x_k)$ be an existential FO formula with $k$ free variables
and $\ell$ quantifiers.
Also, let $\mathcal{G}$ be any nowhere dense graph class. 
Then, for every $\varepsilon>0$,
there exists a computable function $f:\NN\to\NN$ depending on $\varepsilon$
and $\mathcal{G}$, such that
$$\xc\!\big(\FORM\nolimits_\phi(G)\big)\leqslant f(k+\ell)\cdot	n^{1+\varepsilon} $$ 
% $$\xc\!\big(\STAB\nolimits_k(G)\big)\leqslant f(k)\cdot n^{1+\varepsilon} $$ 
holds for every integer $n$ and every $n$-vertex graph~$G\in\mathcal{G}$.
Furthermore, an explicit extension of $\FORM\nolimits_\phi(G)$ of this size
can be found in polynomial time for fixed $\mathcal{G}$ and $k,\ell$.
\end{theorem}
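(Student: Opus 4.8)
The plan is to mimic the proof of Theorem~\ref{thm:bdexptoxc} and Lemma~\ref{lem:exFOupper}, replacing the low treedepth coloring of Theorem~\ref{thm:low_td_coloring} (which is special to bounded expansion) by the analogous ``low treedepth coloring'' statement available for nowhere dense classes. Concretely, for a nowhere dense class $\mathcal{G}$, for every $p\in\NN$ and every $\varepsilon>0$ there is a function $N_{\mathcal{G},\varepsilon}(p)$ such that every $n$-vertex $G\in\mathcal{G}$ admits a coloring with $N_{\mathcal{G},\varepsilon}(p)\cdot n^{\varepsilon}$ colors in which the union of any $p$ color classes induces a subgraph of treedepth at most~$p$; moreover this coloring is computable in polynomial time for fixed $\mathcal{G}$, $p$ and $\varepsilon$. (This is the nowhere dense analogue of low treedepth colorings from~\cite{NOdM12}; I would cite it directly.)

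First I would apply this with $p:=k+\ell$ and the given~$\varepsilon$, obtaining a coloring $c$ of $G$ with $m:=N_{\mathcal{G},\varepsilon}(k+\ell)\cdot n^{\varepsilon}$ colors. As in Lemma~\ref{lem:exFOupper}, for each $(k+\ell)$-element set $J$ of colors let $G_J$ be the subgraph of $G$ induced by the color classes in~$J$; by the coloring property $tw(G_J)\leq td(G_J)-1\leq k+\ell-1$ (Observation~\ref{obs:tdtw}). Using facts (a) and (b) from the proof of Lemma~\ref{lem:exFOupper} — validity of an existential formula is preserved under passing to a larger induced supergraph, and a witnessing assignment together with at most $\ell$ extra vertices already satisfies $\phi$ inside the induced subgraph on those $\le k+\ell$ vertices, which by (b) lie in some common $G_J$ — we again get
$$\FORM\nolimits_\phi(G)=\conv\!\Big(\bigcup\nolimits_{J}\FORM\nolimits_\phi(G_J)\Big),$$
the union ranging over all $\binom{m}{k+\ell}$ choices of~$J$.

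Then Theorem~\ref{thm:fo_tw} bounds each $\xc(\FORM_\phi(G_J))\leq g(k+\ell,k+\ell-1)\cdot|V(G_J)|\leq g(k+\ell,k+\ell-1)\cdot n$, and Balas' bound (Theorem~\ref{thm:union_balas}) yields
$$\xc\!\big(\FORM\nolimits_\phi(G)\big)\leq \binom{m}{k+\ell}\cdot\big(1+g(k+\ell,k+\ell-1)\cdot n\big).$$
Since $\binom{m}{k+\ell}\leq m^{k+\ell}=\big(N_{\mathcal{G},\varepsilon}(k+\ell)\big)^{k+\ell}\cdot n^{(k+\ell)\varepsilon}$, the bound is $f(k+\ell)\cdot n^{1+(k+\ell)\varepsilon}$ for a suitable computable $f$ depending on $\mathcal{G}$ and~$\varepsilon$. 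To land exactly at $n^{1+\varepsilon}$ as stated, I would simply run the whole argument with $\varepsilon':=\varepsilon/(k+\ell)$ in place of~$\varepsilon$ (legitimate since $k,\ell$ are fixed and $\varepsilon'>0$), so the exponent becomes $1+\varepsilon$. The construction is polynomial-time for fixed $\mathcal{G},k,\ell$ because the coloring, the extensions of Theorem~\ref{thm:fo_tw}, and Balas' union construction are all polynomial-time (for fixed parameters). I expect the only genuine obstacle to be pinning down and correctly citing the nowhere dense low treedepth coloring statement with the right $n^{\varepsilon}$-type bound on the number of colors; once that black box is in hand, the rest is a verbatim adaptation of Section~\ref{sec:upperbound}.
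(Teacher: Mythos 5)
Your proposal is correct and follows essentially the same route as the paper: replace the bounded-expansion low treedepth coloring by the nowhere dense version with $n^{\varepsilon'}$ colors, set $\varepsilon'=\varepsilon/(k+\ell)$, and repeat the decomposition and Balas argument of Lemma~\ref{lem:exFOupper}. The only cosmetic difference is that the paper states the coloring guarantee for $n$ above a threshold and handles smaller $n$ by brute force, whereas you absorb the threshold into a multiplicative constant; both are fine.
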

\begin{proof}
  The approach for nowhere dense classes is nearly the same as in the proof
  of Lemma~\ref{lem:exFOupper}:  by~\cite{NOdM08,NOdM12},
  for a nowhere dense class~$\cal G$ and $\varepsilon' > 0$, $p \in \mathbb N$ there
  exists a threshold~$N_{\varepsilon',p}$ such that each~$G \in \cal G$ with
  $n=|G| \geq N_{\varepsilon',p}$ admits a low treedepth coloring
  of order $p$ with at most~$N=n^{\varepsilon'}$ colors.
  In such case we take $p:=k+\ell$ and $\varepsilon':=\varepsilon/(k+\ell)$.

  Now, setting 
  $\mathcal{J}_{k+\ell}:={[N] \choose k+\ell}$, we conclude as in
  the proof of Lemma~\ref{lem:exFOupper}:
\begin{align*}
\xc\!\big(\FORM\nolimits_\phi(G)\big) &\leqslant
	|\mathcal{J}_{k+\ell}|\cdot\big(1+g(k+\ell,k+\ell-1)\cdot n\big)
\\ &\leqslant N^{k+\ell}\cdot f(k+\ell)\cdot n
	= f(k+\ell)\cdot (n^{\varepsilon/(k+\ell)})^{k+\ell}\cdot n
\\	&= f(k+\ell)\cdot n^{1+\varepsilon}
\,.\end{align*}
On the other hand, for  $n=|G| < N_{\varepsilon',p}$ we have got a finite
problem which is solved by brute force.
\end{proof}

However, we now cannot directly proceed towards full FO logic in the direction
of Theorem~\ref{lem:fullFOupper} since we do not have a tool alike
Lemma~\ref{lem:fullFOtoex} available for the case of nowhere dense classes.

\section{Conclusions}
\label{sec:conclusions}

We have begun to study the question: to which extent FP tractability
of the $k$-independent set problem on graph classes is related to the FPT
extension complexity of the (corresponding) $k$-independent-set polytope?
Not surprisingly, we confirm that there cannot be FPT extensions of this
polytope in the class of all graphs (note, though, that our proof is
absolute and does not rely on the assumption $FPT\not=W[1]$).
On the other hand, the $k$-independent-set problem is linear-time FPT
on graph classes of bounded expansion~\cite{NOdM06}, and we construct a
linear FPT extension for its polytope on such classes.
This positive result then routinely carries over to all
FO problems on graph classes of bounded expansion.
% FO problems of graphs.

We now outline possible natural directions of future research in this regard.
\begin{enumerate}
% \item A graph class $\mathcal{G}$ is {\em nowhere dense}~\cite{NOdM12} 
% if there is no integer $d$ such that $\mathcal{G}\gradd d$ contains all graphs.
% Every graph class of bounded expansion is nowhere
% dense, but the converse is not true.
% The $k$-independent-set problem is also known to be in FPT on every
% nowhere dense class, which follows from a more general result of~\cite{gks14}.
% It is natural to ask whether the same can hold for the fixed-parameter
% extension complexity of its polytope.
\item The deep tractability result of \cite{gks14}
addresses problems in full FO logic of graphs from nowhere dense classes.
This suggests that perhaps, for every FO formula~$\phi$
(not only the existential ones as in Theorem~\ref{thm:nodenseexFO}),
the related $\phi$-polytope can also have
FPT extension complexity on nowhere dense graph classes.
Though, the involved proof techniques of~\cite{gks14} do not seem to
easily translate to the extension complexity setting.
\end{enumerate}

In a broader view, one may regard the property of
a problem polytope having an~FPT extension complexity as a finer
(case-by-case) resolution of the class FPT.
For an explanation; the well-established assumption $FPT\not=W[1]$
implies that problems not in FPT do not have FPT extensions,
while on the other hand the example of the matching polytope \cite{Roth:14}
suggests that there may also be FPT problems whose polytopes do not have
FPT extensions.
(In other words, a situation could be analogous to that of polynomial
kernelization; while every problem with a polynomial kernel is FPT, many FPT
problems do not admit a polynomial kernel.)
We believe that this task is worth further detailed investigation.

One may try to proceed even further and ask a general question:
\begin{enumerate} 	
\setcounter{enumi}{1} \item \label{it:broad}
Is it true that all $W[t]$-hard problems for some $t\geq1$ do not admit
	FPT extensions?
% 		extended formulations that are of the form $f(k)\cdot n^{\Oh{1}}$?
\end{enumerate}
% Since one may easily associate various different polytopes to any given
% problem, there has been no systematic study of the notion of ``extended
% formulations for problems''.  Although many existing results apply to wide
% class of problems, the polytopes associated with a problem remains a
% specific choice which by no means is the only choice.  So in a very
% concrete sense the above question is ill-posed.
However,
this question is not even easy to formulate since the polytope we associate
with a problem remains a specific choice which by no means is the only choice.

Moreover, it can be argued that either possible answer to the very broad
Question \eqref{it:broad} would be a significant breakthrough in the
complexity world.
Say, since some FPT problems such as $k$-vertex cover do admit
FPT extensions \cite{B:15}, an affirmative answer to \eqref{it:broad} would imply 
that this problem is not $W[t]$-complete and so $FPT\neq W[t]$.
On the other hand, if the answer to \eqref{it:broad} was no, 
then this would imply the existence of non-uniform FPT circuits
for $W[t]$-complete problems which is considered unlikely.

% \subsection*{Acknowledgments}
% 
% The research of H.~Tiwary is partially supported by the Center of Excellence --
% Institute for Theoretical Computer Science, Prague (project P202/12/G061 of
% the Czech Scoence Foundation).

\section*{References}
\bibliography{ght}

\begin{thebibliography}{10}

\bibitem{AT2013}
D.~Avis and H.~R. Tiwary.
\newblock On the extension complexity of combinatorial polytopes.
\newblock In {\em ICALP(1) '13}, pages 57--68, 2013.

\bibitem{balas21}
E.~Balas.
\newblock Disjunctive programming: {P}roperties of the convex hull of feasible
  points.
\newblock {\em Discrete Applied Mathematics}, 89(1--3):3--44, 1998.

\bibitem{bod96}
H.~Bodlaender.
\newblock A linear time algorithm for finding tree-decompositions of small
  treewidth.
\newblock {\em SIAM J. Comput.}, 25:1305--1317, 1996.

\bibitem{BraunFPS15}
G.~Braun, S.~Fiorini, S.~Pokutta, and D.~Steurer.
\newblock Approximation limits of linear programs (beyond hierarchies).
\newblock {\em Math. Oper. Res.}, 40(3):756--772, 2015.

\bibitem{B:15}
A.~Buchanan.
\newblock Extended formulations for vertex cover.
\newblock {\em Oper. Res. Lett.}, 44(3):374--378, 2016.

\bibitem{BB:14}
A.~Buchanan and S.~Butenko.
\newblock Tight extended formulations for independent set, 2014.
\newblock Available on Optimization Online
  http://www.optimization-online.org/DB{\textunderscore}HTML/2014/09/4540.html.

\bibitem{CLRS13}
S.~O. Chan, J.~R. Lee, P.~Raghavendra, and D.~Steurer.
\newblock Approximate constraint satisfaction requires large {LP} relaxations.
\newblock In {\em FOCS'13}, pages 350--359, 2013.

\bibitem{ConfortiCornuejolsZambelli10}
M.~Conforti, G.~Cornu{\'e}jols, and G.~Zambelli.
\newblock Extended formulations in combinatorial optimization.
\newblock {\em 4OR}, 8:1--48, 2010.

\bibitem{DBLP:series/txcs/DowneyF13}
R.~G. Downey and M.~R. Fellows.
\newblock {\em Fundamentals of Parameterized Complexity}.
\newblock Texts in Computer Science. Springer, 2013.

\bibitem{FMPTW15}
S.~Fiorini, S.~Massar, S.~Pokutta, H.~R. Tiwary, and R.~de~Wolf.
\newblock Exponential lower bounds for polytopes in combinatorial optimization.
\newblock {\em J. {ACM}}, 62(2):17, 2015.

\bibitem{GK09}
M.~Grohe and S.~Kreutzer.
\newblock Methods for algorithmic meta theorems.
\newblock In {\em Model Theoretic Methods in Finite Combinatorics: AMS-ASL
  Special Session, January 5-8, 2009}, Contemporary Mathematics, pages
  181--206. AMS, 2011.

\bibitem{gks14}
M.~Grohe, S.~Kreutzer, and S.~Siebertz.
\newblock Deciding first-order properties of nowhere dense graphs.
\newblock In {\em STOC'14}, pages 89--98. {ACM}, 2014.

\bibitem{Kaibel11}
V.~Kaibel.
\newblock Extended formulations in combinatorial optimization.
\newblock {\em Optima}, 85:2--7, 2011.

\bibitem{KKT15}
P.~Kolman, M.~Kouteck{\'{y}}, and H.~R. Tiwary.
\newblock Extension complexity, {MSO} logic, and treewidth.
\newblock In {\em SWAT'16}, pages 18:1--18:14, 2016.

\bibitem{NOdM06}
J.~Ne\v{s}et\v{r}il and P.~{Ossona de Mendez}.
\newblock Linear time low tree-width partitions and algorithmic consequences.
\newblock In {\em STOC'06}, pages 391--400. ACM, 2006.

\bibitem{NOdM08}
J.~Ne\v{s}et\v{r}il and P.~{Ossona de Mendez}.
\newblock Grad and classes with bounded expansion {I.} {D}ecompositions.
\newblock {\em European J. Combin.}, 29(3):760--776, 2008.

\bibitem{NOdM12}
J.~Ne\v{s}et\v{r}il and P.~{Ossona de Mendez}.
\newblock {\em {Sparsity: Graphs, Structures, and Algorithms}}, volume~28 of
  {\em Algorithms and Combinatorics}.
\newblock Springer, 2012.

\bibitem{PV13}
S.~Pokutta and M.~Van~Vyve.
\newblock A note on the extension complexity of the knapsack polytope.
\newblock {\em Operations Research Letters}, 41(4):347--350, 2013.

\bibitem{Roth:14}
T.~Rothvo{\ss}.
\newblock The matching polytope has exponential extension complexity.
\newblock In {\em {STOC '14}}, pages 263--272, 2014.

\bibitem{VanderbeckWolsey2010}
F.~Vanderbeck and L.~A. Wolsey.
\newblock Reformulation and decomposition of integer programs.
\newblock In M.~J{\"u}nger et~al., editor, {\em 50 {Y}ears of {I}nteger
  {P}rogramming 1958-2008}, pages 431--502. Springer, 2010.

\bibitem{Wolsey11}
L.~A. Wolsey.
\newblock Using extended formulations in practice.
\newblock {\em Optima}, 85:7--9, 2011.

\end{thebibliography}

\end{document}